\documentclass{lmcs}
\pdfoutput=1

% LMCS Layouting Macros
\usepackage{lastpage}

\lmcsheading{}{1--\pageref{LastPage}}{}{}%
{Jul.~22,~2015}{Nov.~15,~2018}{}

\usepackage[T1]{fontenc}
\usepackage{amsmath,amsfonts,amssymb,stmaryrd}
\usepackage{xspace}
\usepackage{cmll}
\usepackage{mathtools}
\usepackage{listings}
\usepackage{stmaryrd}
\usepackage{mathpartir}
\usepackage{eurosym}
\usepackage{centernot}

\usepackage{tikz}
\usetikzlibrary{calc}
%\usepgfmodule{oo}
\usetikzlibrary{arrows}
\usetikzlibrary{positioning}
\usetikzlibrary{fadings}
\usetikzlibrary{chains}
\usetikzlibrary{fit}
\usetikzlibrary{intersections}

%
% Typing derivations
%
\newcommand{\tyder}{\ensuremath{\mathcal{D}}}
\newcommand{\tyderT}[2]{\ensuremath{\mathcal{D}_{#1}::#2}}
\newcommand{\tyderHs}[2]{\ensuremath{\mathcal{D}_{#1}[\cdot_{#2}]}}
\newcommand{\tyderH}[3]{\ensuremath{\mathcal{D}_{#1}[\cdot_{#2}]::#3}}
\newcommand{\tyderCs}[2]{\ensuremath{\mathcal{D}_{#1}[#2]}}
\newcommand{\tyderC}[3]{\ensuremath{\mathcal{D}_{#1}[#2]::#3}}
%
% Environments
%
\theoremstyle{plain}

\newcommand{\Case}{\noindent\textbf{Case}\ }
%
% Shorthands
%
\newcommand{\ie}{i.e.} % why??
\newcommand{\eg}{e.g.}

% it should not be italicized when you are using it as part of a
% reference
% citation. http://blog.apastyle.org/apastyle/2011/11/the-proper-use-of-et-al-in-apa-style.html
\newcommand{\etal}{et al.\xspace}
\newcommand{\qt}[1]{``#1''}
\newcommand{\pic}{\ensuremath{\pi\textrm{-calculus}}}

\newcommand{\tsf}[1]{\ensuremath{\mathsf{#1}}}
\newcommand{\txsf}[1]{\ensuremath{\textsf{#1}}}

\newcommand{\grmeq}{\ensuremath{::=}}
\newcommand{\aconv}{\ensuremath{\alpha}-conversion}
%
% technical: 
%
\newcommand{\ovl}[1]{\ensuremath{\overline{#1}}}

\newcommand{\blk}[1]{\ensuremath{(#1)}} % nobody add space after ( and
                                % before )
\newcommand{\aact}{\ensuremath{\alpha}}

\newcommand{\setsep}{:} % the vertical bar can be confused with
                        % parallel composition

%
% Definitions
%

\newcommand{\set}[1]{\ensuremath{\left\{#1\right\}}}

\newcommand{\kw}[1]{\ensuremath{\mathsf{#1}}}        % keywords                                       
\def\R{\longrightarrow} 												  % one-step reduction relation
\def\notR{\centernot\R}
\newcommand{\defeq}{\ensuremath{\,\doteq\,}}

%
% Types
%
\newcommand{\dual}[1]{\ensuremath{\overline{#1}}}        % Duality 
\newcommand{\tp}{\ensuremath{\colon}}  						  % a : T  with colon 

%
   											 % session types

%
\def\recv#1{\wn{#1}}
\def\send#1{\oc {#1}}
\newcommand{\bra}{{\with}}
\newcommand{\sel}{{\oplus}}
\newcommand{\setT}[3]{\set{ l_{#2} \tp {#1} }_{#2 \in #3}}     % e.g. \setT {T} {i} {I}    is   {l_i : T_i}_{i \in I}   // use with \bra or \sel 

\newcommand{\sessend}{\ensuremath{\mathsf{end}}}
\newcommand{\aT}[1]{\ensuremath{\mathsf{acc} \; #1}}		% accept T    i.e. !T  in logic
\newcommand{\rT}[1]{\ensuremath{\mathsf{req} \; #1}}		% request T  i.e. ?T  in logic
%
% Typing judgement
%
%\newcommand{\hast}{\ensuremath{\, \vdash \,}}
\newcommand{\hast}{\ensuremath{\vdash}}

\newcommand{\seq}[2]{\ensuremath{#2 \hast #1}}
\newcommand{\trule}[3]{\ensuremath{ \inferrule[{\kw{(#1)}}]{#2}{#3} }}

\newcommand{\norequests}{\mathsf{norequests}}

%
% Process syntax
%
\newcommand{\new}{\ensuremath{\boldsymbol \nu}}
\newcommand{\newc}[1]{\ensuremath{(\new #1)}}

   % <#1>
\newcommand{\kils}[1]{\ensuremath{#1 \lightning}}
\newcommand{\nil}{\ensuremath{\mathbf{0}}}
\newcommand{\prl}{\ensuremath{\mid}}                          % parallel composition with a bit of spacing 
% below: substitution {a / b}
\newcommand{\subs}[2]{\ensuremath{ \{ \raisebox{2.6pt}{{\small $ #1 $}} \!\, / \!\, \raisebox{-1.0pt}{{\small $#2$}} \} }} 
%
% Prefixes for try-catch
%
\newcommand{\spref}{\ensuremath{\rho}}
%
% I / O
%
\newcommand{\sout}[2]{\ensuremath{\ovl{#1} #2}}
\newcommand{\sinp}[2]{\ensuremath{#1(#2)}}
\newcommand{\req}[2]{\ensuremath{ \tsf{req}\: \ovl{#1} {#2} }}
\newcommand{\reqi}[3]{\ensuremath{ \tsf{req}\: \ovl{#1}_{#2} {#3} }}
\newcommand{\acc}[2]{\ensuremath{ \tsf{acc}\: \sinp{#1}{#2} }}
%
% below: old macros (old notation) 
%
%\newcommand{\req}[2]{\ensuremath{ \raisebox{0.6pt}{\at} \ovl{#1} {#2} }}
%\newcommand{\reqi}[3]{\ensuremath{ \raisebox{0.6pt}{\at} \ovl{#1}_{#2} {#3} }}
%%\newcommand{\req}[2]{\ensuremath{ \ovl{#1} {#2} }}
%\newcommand{\acc}[2]{\ensuremath{ {\ast} \sinp{#1}{#2} }}
%1
% Branching
%
\newcommand{\choiceI}[2]{\ensuremath{ \ovl #1 \triangleleft {l_#2} }}
\newcommand{\choice}[2]{\ensuremath{\ovl #1 \triangleleft \tsf{#2} }}
\newcommand{\branchI}[4]{\ensuremath{#1 \triangleright \set{ {l_{#3}} . #2_{#3} }_{#3 \in #4}}}
\newcommand{\branch}[2]{\ensuremath{#1 \triangleright \set{#2}}}

\newcommand{\orBra}{\ensuremath{ \talloblong }}

\newcommand{\methK}[2]{\ensuremath{ \tsf{#1}.#2 }}
%
% try-catch
%
\newcommand{\trycatch}[2]{\ensuremath{ \tsf{do} \; #1 \; \tsf{catch}
    \; #2 }}
\newcommand{\docatchk}{$\tsf{do}$-$\tsf{catch}$} % The do-catch keyword
%
% catch only
%

%
% Contexts for reduction
%

\newcommand{\gctx}[1]{\ensuremath{H[\,#1\,]}}

\newcommand{\gctxi}[2]{\ensuremath{H_{#1}[\,#2\,]}}

% Typesetting

% Predicates
\newcommand{\inact}[1]{\ensuremath{\mathsf{inactive}(#1)}}
\newcommand{\acti}[1]{\ensuremath{\mathsf{active}(#1)}}

% Catalyst process interpretation 

\newcommand{\cata}[1]{\ensuremath{\llparenthesis\, #1 \,\rrparenthesis}}

% Observations

\newcommand{\barb}[2]{\ensuremath{#2 {\downarrow_{#1}}}}
\newcommand{\wbarb}[2]{\ensuremath{#2 {\Downarrow_{#1}}}}

%%% Local Variables: 
%%% mode: latex
%%% TeX-master: "main"
%%% End: 
 
%
% Show node names for rapid design?
%
\newif\ifshownodes
\pgfkeys{/show nodes/.is if=shownodes}
\pgfkeys{/show nodes=true}
\pgfkeys{/show nodes=false}

\tikzfading[name=fade right,
left color=transparent!0,
right color=transparent!100]

\tikzfading[name=fade left,
right color=transparent!0,
left color=transparent!100]

\tikzfading[name=fade out,
inner color=transparent!50,
outer color=blue]

\tikzset{% 
role/.style={line width=1.2pt, color=black, rounded corners=0.2cm, line cap=round} , 
comm/.style={role, color=black},
crossing comm/.style={comm, double distance=1.2pt, draw=white, double=black}, %line width=1.2pt, draw=white, double=blue},
%cut/.style={ role, color=black },
%crossing cut/.style={crossing axiom, double=olive} , 
%inactive cut/.style={dashed}, % no scope... cannot reduce, premises are inactive too 
%premise/.style={link, dashed},
%box/.style={inner sep=2pt, outer sep=0pt, line width=1.2pt, color=purple, rounded corners=0.2cm}, 
%crossing boxlink/.style={box, double distance=1.2pt, draw=white, double=purple}, %line width=1.2pt, draw=white, double=blue},
%name box/.style={box,dotted},
%helpnodes/.style={color=darkgray!60}
}%

\pgfdeclarelayer{background}
\pgfdeclarelayer{foreground}
\pgfsetlayers{background,main,foreground}

\newdimen\xA
\newdimen\xB
\newdimen\xMid
\newdimen\yA
\newdimen\yB
\newdimen\yLength

\newdimen\deltaY
\newdimen\delta

%
%  A "role" is a vertical component of a sequence diagram ( box on top, line flowing down, anchor points on line )
%
% parameters: 
%     1: text, 2: origin coordinate (center of role box), 3: height, 4: number of nodes vertically   
%     
%  uses default height and space between from above.
%
\newcommand{\aRole}[4]{%

\pgfmathsetlength{\deltaY}{0pt} % reset global length
\pgfmathsetlength{\delta}{0pt} % reset global length

\path (#2) node(#1-node) [role, draw, color=black, anchor=south] {{\vphantom{Ay}\texttt{#1}}} ;

\path (#2) coordinate (#1-top) ; 

\draw[role, dotted, color=black] (#1-top) -- ($ (#1-top) + (0,-#3) + (0,-0.2cm) $) ;

\pgfmathsetlength\deltaY{#3/(#4)} ;

    \foreach \i [count=\j] in {1,...,#4}
           { 
			% next coord, going down, is origin center y - i * dY           
			
			\pgfmathaddtolength\delta{\deltaY}
	         \global\delta=\delta 
           
            \path ($ (#1-top) + (0,-\delta) $) coordinate (#1-\i) ;
           };

\path ($ (#1-#4) + (0,-0.0cm) $) coordinate (#1-bot) ;

}%

%
%  A "role" is a vertical component of a sequence diagram ( box on top, line flowing down, anchor points on line )
%
% parameters: 
%     1: text, 2: origin coordinate (center of role box), 3: bottom coordinate, 4: number of nodes vertically   
%     
%  uses default height and space between from above.
%
%

%
%  A "role" is a vertical component of a sequence diagram ( box on top, line flowing down, anchor points on line )
%
% parameters: 
%     1: text, 2: origin coordinate (center of role box), 3: bottom coordinate, 4: number of nodes vertically , 5: Node name (internal)  
%     
%  uses default height and space between from above.
%
%

%
%  A "message" from a to b
%
% parameters: 
%     1: positional arguments for text [optional], 2: point a, 3: point b, 4: name, 5: text   
%     
%  uses default height and space between from above.
%
\newcommand{\aMsg}[5][midway]{%
    \draw[role,->] (#2) -- (#3) node[above, #1] {$#4 : \mathsf{#5}$};
}

%
%  A "message" from a to b that crosses over other messages 
%    * works only left-to-right for now, if needed will extend with generic version..
%    ** UPDATE: added arrow tip arg, now works both ways
%
% parameters: 
%     1: positional arguments for text [optional], 2: point a, 3: point b, 4: name, 5: text, 6: arrow tip left, 7: arrow tip right  
%     
%  uses default height and space between from above.
%
\newcommand{\aCrossingMsg}[7][midway]{%
   \path ($ (#2) +(0.2cm,0) $) coordinate (#2-#3-L); 
   \path ($ (#3) +(-0.2cm,0) $) coordinate (#2-#3-R);
   \begin{pgfonlayer}{foreground}
    \draw[role,#6] (#2) -- (#2-#3-L) ;
    \draw[role,#7] (#2-#3-R) -- (#3) ;
    \end{pgfonlayer}
    \draw[role,crossing comm] (#2-#3-L) -- (#2-#3-R) node[above, #1] {$#4 : \mathsf{#5}$} ;
}

%
% \draw[->] (description) to [out = 180, in = 0, looseness = 2] (text);
%
%
%  A curved "message" from a to b
%
% parameters: 
%     1: drawing arguments[opt], 2: point a, 3: point b, 4: name, 5: text   
%     
%  uses default height and space between from above.
%

%
% parameter is origin for \oplus symbol
%
%
%%% Local Variables:
%%% mode: latex
%%% TeX-master: "main.tex"
%%% End:

%\pdfpagesattr{/CropBox [92 92 523 718]} % LNCS page: 152x235 mm

\begin{document}

\title{Affine Sessions}
\author[Mostrous]{Dimitris Mostrous}
\address{University of Lisbon, Faculty of Sciences and LASIGE Lisbon, Portugal}
\email{dimitris.mostrous@googlemail.com}
\author[Vasconcelos]{Vasco Thudichum Vasconcelos}
\address{\vskip-7pt}
\email{vmvasconcelos@ciencias.ulisboa.pt}
% addresses should be duplicated when authors share an affiliation
%% mandatory lists of keywords and classifications:
\keywords{Session typing, Affine logic, pi-calculus}
\subjclass{  %F.1.2 Modes of Computation (Parallelism and concurrency), 
                  F.3.2 Semantics of Programming Languages (Process Models), 
                  F.4.1 Mathematical Logic (Proof theory)}
\titlecomment{An extended abstract of this paper appeared in COORDINATION 2014.}

\begin{abstract}
  Session types describe the structure of communications implemented by
channels.  In particular, they prescribe the sequence of
communications, whether they are input or output actions, and the type
of value exchanged.  Crucial to any language with session types is the
notion of linearity, which is essential to ensure that channels
exhibit the behaviour prescribed by their type without interference in
the presence of concurrency.  In this work we relax the condition of
linearity to that of affinity, by which channels exhibit at most the
behaviour prescribed by their types.  This more liberal setting allows
us to incorporate an elegant error handling mechanism which simplifies
and improves related works on exceptions.  Moreover, our treatment
does not affect the progress properties of the language: sessions
never get stuck.

%%% Local Variables:
%%% mode: latex
%%% TeX-master: "main"
%%% End:

\end{abstract}

\maketitle 

\section{Introduction}\label{sec:intro}

A session is \emph{a semantically atomic chain of communication
  actions which can interleave with other such chains freely, for
  high-level abstraction of interaction-based computing}~\cite{THK}.
\emph{Session types}~\cite{honda.vasconcelos.kubo:language-primitives}
capture this intuition as a description of the structure of a
protocol, in the simplest case between two programs (binary sessions).
This description consists of types that indicate whether a
communication channel will next perform an output or input action, the
type of the value to send or receive, and what to do next,
inductively.  

For example,
$\send {\tsf{nat}}.\send {\tsf{string}}.\recv {\tsf{bool}}.\sessend$
is the type of a channel that will first send a value of type
\tsf{nat}, then one of type \tsf{string}, then receive a value of
type \tsf{bool}, and nothing more.  This type can be materialised by
the \pic~\cite{MilnerR:calmp1} process
$P_1 \defeq \sout a 5 . \sout {a} {\,\text{\qt{hello}}} . \sinp a x . \nil$.  
The dual of the previous type is
$\recv {\tsf{nat}}.\recv {\tsf{string}}.\send {\tsf{bool}}.\sessend$,
and can be implemented by
$P_2 \defeq \sinp b x . \sinp b y . \sout b {(x + 1 < 2)} . \nil$.
To compose two processes and enable them to communicate, 
we use a \emph{double binder}~\cite{Vasconcelos2012}. 
For the above example, we can write 
$\newc{ab}\blk{ P_1 \prl P_2 }$, indicating that $a$ and $b$ are the two 
\emph{endpoints} of the same channel. 
The double binder guides reduction, so that we have 
$\newc{ab}\blk{ P_1 \prl P_2 } \R 
\newc{ab}\blk{\sout {a} {\,\text{\qt{hello}}} . \sinp a x . \nil} \prl \sinp b y . \sout b {(5 + 1 < 2)} . \nil$. 
In a well-typed term, the endpoints of a channel must have 
complementary (or \emph{dual}) types, so that an input on one will
match an output on the other, and vice versa.   
This is the case for $a$ and $b$, above. 

Beyond the basic input/output types, sessions typically provide
constructors for alternative sub-protocols, which are very useful for
structured
interaction. %\footnote{These are inspired by Linear Logic~\cite{Girard87}.}
For example, type
$\bra \set{ \tsf{go} \tp T_1, \tsf{cancel} \tp T_2 }$ can be assigned
to an (external) choice
$\branch a { \methK {go} {Q_1} \orBra \methK {cancel} {Q_2} }$, a
process that offers the choice \tsf{go} and then $Q_1$ or \tsf{cancel}
and then $Q_2$.  The dual type, where $\dual T$ denotes $T$ with an
alternation of all constructors, is
$\sel \set{ \tsf{go} \tp \dual{T_1}, \tsf{cancel} \tp \dual{T_2} }$,
and corresponds to a process that will make a (internal) choice,
either $\choice b {go}.R_1$ or $\choice b {cancel}.R_2$.  In the first
case the two processes will continue as $Q_1$ and $R_1$, respectively.
%We will also be using the types

\subsection*{From Linearity to Affinity}

To ensure that sequenced interactions take place in the prescribed
order, session typing relies crucially on the notion of
\emph{linearity}~\cite{Girard87}.  However, instead of requiring each
endpoint to appear exactly once in a term, which is the standard
notion of linearity, session systems only require that an endpoint can
interact once at any given moment.  Both channel ends $a$ and $b$ in
processes $P_1$ and $P_2$ are linear in this sense.  To see why this
condition is required, imagine that we write the first process as
$P_1' \defeq \sout a 5 . \nil \prl \sout {a} {\,\text{\qt{hello}}}
. \sinp a x . \nil$.  Now, $a$ does not appear linearly in $P_1'$
since there are two possible outputs ready to fire. The net effect is
that $P_2$ can receive a \qt{hello} first, which would clearly be
unsound and would most likely raise an error in any programming
environment.
%There are some situations in which this condition is relaxed, which will be evident later. 
We only relax this condition in one case: two outputs (of the same type) are allowed 
in parallel when the dual endpoint is a replicated input. % which can ser.

%Sessions are an expressive static verification mechanism for 
%interacting programs.   
It is because of linearity, as explained above, that sessions can be used to 
structure protocols with sequences of inputs and outputs, without losing type safety.  
However, linearity is a rather rigid condition, because it demands that 
everything in the description of a session type \emph{must} be implemented 
by an endpoint with that type. 
%For example we cannot decide to not perform the last input of $P_1$, because 
%this will leave
%
%Although sessions are very intuitive for structuring protocols, they are also quite rigid, 
%since everything in the description of a session type \emph{must} be implemented 
%by an endpoint with that type. 
In real world situations, interactions are structured but can be
aborted at any time.  For example, an online store should be prepared
for clients that get disconnected, that close their web browsers, or
for general \emph{errors} that abruptly severe the expected pattern of
interaction.

In this work we address the above issue. % problem.
In technical terms, we relax the condition of linearity to that of
\emph{affinity}, so that endpoints can perform less interactions than
the ones prescribed by their session type.
%and this allows processes to terminate their sessions prematurely.    
However, a naive introduction of affinity can leave programs in a
stuck state: let us re-write $P_1$ %the first process
into
$P_1'' \defeq \sout a 5 . \sout {a} {\,\text{\qt{hello}}} . \nil$,
\ie, without the final input $\sinp a x$; then, after two
communications process $\newc{ab}\blk{ P_1'' \prl P_2 }$ will be stuck
trying to perform the output $\sout b {(5 + 1 < 2)} . \nil$.
%Therefore, in such a scenario a very desirable property of sessions is lost, namely \emph{progress}. 
%Our solution is to ensure that when sessions finish incomplete   
We want to be able to perform only an initial part of a session, but we also want to ensure that 
processes do not get stuck waiting for 
communications that will never take place. %, as above.  
Our solution is to introduce a new kind of communication action written 
$\kils a$, which reads \emph{cancel} $a$. 
This action is used to explicitly signal that a %channel's 
session has finished, so that 
%other 
communications on the other endpoint % channel 
can also be cancelled and computation can proceed. 
For example, we can replace $P_1$ with 
$P_1^c \defeq \sout a 5 . \sout {a} {\,\text{\qt{hello}}} . \kils a$, 
and after two steps $\newc{ab}\blk{ P_1^c \prl P_2 }$ becomes  %against the dual process we obtain 
$\newc{ab}\blk{ \kils a \prl \sout b {(5 + 1 < 2)} . \nil}$, which 
reduces (modulo structural equivalence) to $\nil$. 
%We can now do only part of a session without getting stuck. 
%
%results in the 
%cancellation of the output (and in general, of any subsequent actions on $b$). 

Our development is inspired by Affine Logic, the variation of Linear Logic with unrestricted 
weakening. %~\cite{Asperti:2002:ILAL}. 
The work by Asperti~\cite{Asperti:2002:ILAL}, which studies 
Proof Nets for Affine Logic, %~\cite{Asperti:2002:ILAL}, 
shows that weakening corresponds to an actual connective with specific behaviour. % is not, and should not be made, invisible. 
In particular, this connective performs the weakening 
step by step, progressing through the dependencies of a proof, 
and removing all that must be removed. %In this way cut-elimination is not impeded % whatevr, nobody cares
This is exactly what $\kils a$ represents. %: a logical procedure for weakening.  % WHATEVR !
%, where $a$ can be understood as the explicit location of a weakening 
%connective in affine proof nets. 

% do - catch 
%
We take the idea of affinity a step further: if cancellation of a
session is explicit, we can treat it as an \emph{exception}, and for
this we introduce a \docatchk{} construct that can provide an
alternative behaviour activated when a cancellation is encountered.
For example,
$\newc{ab}\blk{ \trycatch {\sout a {(5 + 1 < 2)} . \nil} {P} \prl
  \kils b}$ will result in the \emph{replacement} of
$\sout a {(5 + 1 < 2)} . \nil$ with the exception handler $P$.  Note
that a \docatchk{} is not the same as the try-catch commonly found in
sequential languages: it does not define a persistent scope that
captures exceptions from the inside, but rather it applies to the
first communication and is activated by exceptions from the outside
(as in the previous example).  Thus,
$\newc{ab}\blk{ \trycatch {\sout a {(5 + 1 < 2)} . \nil} {P} \prl
  \sinp b x.\nil}$ becomes $\nil$, because the communication was
successful.

%\paragraph*{Outline of paper} 

The outline of the rest of the paper is as follows. The next section
presents affine sessions in action.  Section~\ref{sec:calculus}
introduces the calculus of affine sessions, Section~\ref{sec:typing}
its typing system, and Section~\ref{sec:properties} the main
properties. Section~\ref{sec:related-work} discusses related works and
future plans. The appendix contains the proof of the Subject Reduction
theorem.

%%% Local Variables:
%%% mode: latex
%%% TeX-master: "main"
%%% End:

\section{Affine Sessions by Example}
\label{sec:example}

\begin{figure}
\begin{center}
\vspace*{10pt}
{\small \begin{tikzpicture}

\coordinate (c) at (0,0);
\path (c) +(4cm,0) coordinate (c1);
\path (c1) +(4cm,0) coordinate (c2);

\aRole{Buyer}{c}{6cm}{9} ;
\aRole{Seller}{c1}{6cm}{9} ;
\aRole{Bank}{c2}{6cm}{9} ;

\aMsg{Buyer-1}{Seller-1}{b}{\text{\qt{Proofs and Types}}}
\aMsg{Seller-2}{Buyer-2}{b}{\text{\EUR{178}}} 
\aMsg{Buyer-3}{Seller-3}{b}{\tsf{select\ buy}}

\aMsg{Seller-4}{Bank-4}{c}{\text{\EUR{178}}}
%\aMsg{Seller-5}{Bank-5}{c}{\tsf{session\ on\ } \textit b}
\aMsg{Seller-5}{Bank-5}{c}{\textit b}

\aCrossingMsg[midway, fill=white]{Buyer-6}{Bank-6}{b}{\textit{ccard}}{}{->}

%\aMsg{Bank-7}{Seller-7}{c}{\tsf{session\ on\ } \textit b}
\aMsg{Bank-7}{Seller-7}{c}{\textit b}
\aMsg{Bank-8}{Seller-8}{c}{\tsf{select\ accepted}} 
\aMsg{Seller-9}{Buyer-9}{b}{\tsf{select\ accepted}} 

\end{tikzpicture} }
\end{center}
\caption{Sequence Diagram for Succesful Book Purchase}\label{fig:buy1}
\end{figure}

%%% Local Variables: 
%%% mode: latex
%%% TeX-master: "main"
%%% End: 

We describe a simple interaction comprising three
processes---\tsf{Buyer}, \tsf{Seller}, and \tsf{Bank}---that
implements a book purchase.  The buyer sends the title of a book,
receives the price, and chooses either to buy or to cancel.  If the
buyer decides to buy the book, the credit card information is sent
over the session, and the buyer is informed whether or not the
transaction was successful.  The diagram in Figure~\ref{fig:buy1}
shows the interactions of a specific purchase.

We now show how this scenario can be implemented using sessions, and
how our treatment of affinity can be used to enable a more concise and
natural handling of exceptional outcomes.  Our language is an almost
standard \pic\ where replication is written $\acc a x . P$ and plays
the role of \qt{accept} in session
terminology~\cite{honda.vasconcelos.kubo:language-primitives}.
Dually, an output that activates a replication is written
$\req a b . P$, and is called a \qt{request.} 
Channels are described by two distinct identifiers, denoting their two
endpoints and introduced by %channel the creation process
$\newc{ab}P$~\cite{Vasconcelos2012}.

We use some standard language constructs that can be easily encoded in
\pic, such as $\sout a e$ for the output of the value obtained by
evaluating the expression $e$, and
$\tsf{if} \: t \: \tsf{then} \: P \: \tsf{else} \: Q$ for a conditional
expression. The latter is an abbreviation of a new session
$\newc{ab}\blk{ \branch a {\methK {true} P \orBra \methK {false} Q}
  \prl R}$ where $R$ represents the test $t$ and evaluates to
$\choice b {true}$ or $\choice b {false}$.
An implementation of the interaction in Figure~\ref{fig:buy1} is: % shown below.
\begin{gather*}
\newc{seller_1 seller_2, bank_1 bank_2}\blk{\: \tsf{Buyer} \prl \tsf{Seller} \prl \tsf{Bank} \:}
\end{gather*}%
where: 
\begin{gather*}
\begin{array}{rcl}
\tsf{Buyer} & \defeq & \newc {bb'}  \left( 
     \begin{array}{l}
         \req {\textit{seller}_1} {\,b'} \prl \sout b {\,\text{\qt{Proofs and Types}}} . \sinp b {\textit{price}} . 
         \tsf{if} \: \textit{price} < 200 \\ 
         \tsf{then} \: \choice b {buy} . \sout b {\,\text{\it{ccard}}} . 
         \branch b { \methK {accepted} {P} \orBra \methK {rejected} {Q} }  \: 
         \tsf{else} \: \choice b {cancel}   
     \end{array} \right)
\\ \\
\tsf{Seller} & \defeq & \acc {\textit{seller}_2} {b} . \left( 
     \begin{array}{l}
            \sinp b {\textit{prod}} . \sout b {\,\text{\it{price}}(\textit{prod})} .   \\
                      b \triangleright \{ \: \tsf{buy} .  \newc {k k'} (  \req {\textit{bank}_1} {\,k'} \prl \sout k {\,\text{\it{price}}(\textit{prod})} . \sout k b . \sinp k {b'} . \\ 
                            \qquad k \triangleright \{ \tsf{accepted}.\choice {b'} {accepted} \orBra \tsf{rejected} .  \choice {b'} {rejected} \}  )  \\
                              \:\,\quad \orBra\: \methK {cancel} {\nil} \}  \\
     \end{array} 
     \right)
\\ \\
 \tsf{Bank} & \defeq & \acc {\textit{bank}_2} {k} . \left( 
     \begin{array}{l}
            \sinp k {\textit{amount}} . \sinp k {b} . \sinp b {\textit{card}} . \sout k b .  \\ 
            \tsf{if} \: \text{\it{charge}}(\textit{amount}, \textit{card}) \: \tsf{then} \:
                  \choice k {accepted} \: \tsf{else} \: \choice k {rejected} 
     \end{array} 
     \right)
\end{array}
\end{gather*}

First we note how sessions are established.  For example, in
\tsf{Buyer} fresh channel end $b'$ is sent to \tsf{Seller} via the
request $\req {\textit{seller}_1} {\,b'}$, while the other end, $b$,
is kept in the \tsf{Buyer} for further interaction.  The identifiers
$b$ and $b'$ are the two endpoints of a session, and it is easy to
check that the interactions match perfectly.
Another point is the \emph{borrowing} of the session $b$ from
\tsf{Seller} to \tsf{Bank}, with subprocess $\sout k b . \sinp k {b'}$
at the \tsf{Seller} process, and
$\sinp k b . \sinp b {\textit{card}} . \sout k b$ at the \tsf{Bank},
so that the credit card information is received directly by \tsf{Bank}; see also
Figure~\ref{fig:buy1}.

A more robust variation of \tsf{Seller} could utilise the \docatchk{}
mechanism to account for the possibility of the \tsf{Bank} not being
available. In this case, the seller would provide an alternative
payment provider.  Concretely, we can substitute
$\req {\textit{bank}_1} {\,k'}$ in \tsf{Seller} with
$\trycatch {\req {\textit{bank}_1} {\,k'}} {\req {\textit{paymate}}
  {\,k'}}$, so that a failure to use the bank service (triggered by
$\kils {\textit{bank}_2}$) will activate
$\req {\textit{paymate}} {\,k'}$ and the protocol has a chance to
complete successfully.

The \tsf{Buyer} might also benefit from our notion of exception
handling.  As an example we show an adaptation that catches a
cancellation at the last communication of the \tsf{buy} branch and
prints an informative message:
 \begin{gather*}
\tsf{BuyerMsg} \: \defeq \: \newc {bb'}  \left( 
     \begin{array}{l}
         \req {\textit{seller}_1} {\,b'} \prl \sout b {\,\text{\qt{Proofs and Types}}} . \sinp b {\textit{price}} . \\
         \tsf{if} \: \textit{price} < 200 \: \tsf{then} \: \choice b {buy} . \sout b {\,\text{\tsf{ccard}}} . \\
         \quad \tsf{do} \: 
               \branch b { \methK {accepted} {P} \orBra \methK {rejected} {Q} }  \\ 
          \quad \tsf{catch} \: {\req {\textit{print}} {\,\text{\qt{An error occurred}}}} \\
         \tsf{else} \: \choice b {cancel}   
     \end{array} \right)
\end{gather*}%

As mentioned in the Introduction, a \docatchk{} on a given
communication does not catch subsequent cancellations.  For instance,
if in the above example the \docatchk{} was placed around
$\sout b {\,\text{\qt{Proofs and Types}}}$, then any $\kils {b'}$
generated \emph{after} this output has been read would be uncaught,
since $\req {\textit{print}} {\,\text{\qt{An error occurred}}}$ would
have been already discarded.
However, a \docatchk{} does catch cancellations emitted \emph{before}
the point of definition, so it should be placed near the
end of a protocol if we just want a single exception handler that
catches everything.  In general, our mechanism is very fine-grained,
and a single session can have multiple, nested \docatchk{} on crucial
points of communication and with distinct alternative behaviours.
 
Note also that cancellation %$\kils a$ 
can be very useful in itself, even without the \docatchk{} mechanism. 
Here are two ways to implement a process that starts a protocol with \tsf{Seller} only 
to obtain the price of a book and use it in $R$: % which can then be used in $R$:% 
\begin{align*}
\tsf{CheckPriceA} \defeq & \newc {bb'}(
         \req {\textit{seller}_1} {\,b'} \prl \sout b
       {\,\text{\qt{Principia Mathematica}}} . \sinp b {\textit{price}} . ( \choice b {cancel} \prl R ) )
\\
\tsf{CheckPriceB} \defeq & \newc {bb'}(
         \req {\textit{seller}_1} {\,b'} \prl \sout b
       {\,\text{\qt{Introduction to Metamathematics}}} . \sinp b {\textit{price}} . ( \kils b \prl R )  )
\end{align*}
Both the above processes can be typed.  However, the first requires a
knowledge of the protocol, which in that case includes an exit point
(branch \tsf{cancel}), while the second is completely transparent.
For example, imagine a buyer that selects \tsf{buy} by accident and
then wishes to cancel the purchase: without cancellation this is
impossible because such behavior is not \emph{predicted} by the
session type; with cancellation it is extremely simple, as shown
below.
\begin{gather*}
\tsf{BuyerCancel} \: \defeq \: \newc {bb'}(
         \req {\textit{seller}_1} {\,b'} \prl \sout b {\,\text{\qt{Tractatus Logico-Philosophicus}}} . \sinp b {\textit{price}} .  \choice b {buy} . \kils b) 
\end{gather*}

%%% Local Variables: 
%%% mode: latex
%%% TeX-master: "main"
%%% End: 

\section{The Process Calculus of Affine Sessions}
\label{sec:calculus}

This section introduces our language, its syntax and operational
semantics.

\subsection*{Syntax} 

The language we work with, shown in Figure~\ref{fig:syntax}, is a
small extension of standard \pic~\cite{MilnerR:calmp1}. We rely on a
denumerable set of \emph{variables}, denoted by lower case roman
letters.
As for processes, instead of the standard restriction $\newc a P$, we
use double binders~\cite{Vasconcelos2012} in the form $\newc{ab}P$,
which are similar to \emph{polarities}~\cite{GaySJ:substp}, and enable
syntactically distinguishing the two endpoints of a session.
For technical convenience we shall consider all indexing sets $I$ to
be non-empty, finite, and totally ordered, so that we can speak, \eg,
of the maximum element.  Also for technical convenience, we separate
the prefixes denoted by $\spref$, \ie, all communication actions
except for accept (replication).  We only added two non-standard
constructs: the \emph{cancellation} $\kils a$ and the \emph{do-catch}
construct that captures a cancellation, written $\trycatch \spref P$.

% bindings, variable convention, substitution.

Parentheses introduce the \emph{bindings} in the language: variable
$x$ is bound in processes $\sinp ax.P$ and $\acc ax.P$; both variables
$x$ and $y$ are bound in process $\newc {xy}P$. The notions of free
and bound variables as well as that of substitution (of $x$ by $a$ in $P$,
notation $P\subs ax$) are defined accordingly.
We follow Barendregt's variable convention, whereby all variables in
binding occurrences in any mathematical context are pairwise distinct
and distinct from the free variables.

\begin{figure}
\begin{center}
\begin{gather*}
\begin{array}{rclcr}
\spref & \grmeq & \sinp a {x} . P 				&\qquad& (\text{input})			\\[.5ex]
   & \prl      &  \sout a b . P          				    && (\text{output}) 		\\[.5ex]
   & \prl      &  \branchI a P i I       			    && (\text{branching}) 	\\[.5ex]
   & \prl      &  \choiceI a {k} . P    			    	&& (\text{selection})		\\[.5ex]
    & \prl      &  \req a {b} . P      				&& (\text{request}) 		\\[.5ex]
   \\  
P   &  \grmeq      &  \spref			&& (\text{prefix}) 		\\[.5ex]
   & \prl      &  \acc a {x} . P 			&& (\text{replicated accept}) 		\\[.5ex]
%   & \prl      &  \req a {b} . P      				&& (request) 		\\[.5ex]
%   & \prl      &  \catch a P Q							&& (catch)			\\[.5ex] 
   & \prl 	  &  \nil 										&& (\text{nil})				\\[.5ex] 
   & \prl 	  &  P \prl Q  								&& (\text{parallel})		\\[.5ex]
   & \prl 	  &  \newc {ab} P 					&& (\text{restriction}) 	\\[.5ex]
    & \prl      &  \kils a 									&& (\text{cancel})			\\[.5ex] 
   & \prl      &  \trycatch \spref P							&& (\text{catch})			\\
\end{array}
\end{gather*}
\end{center}
\caption{Syntax}\label{fig:syntax}
\end{figure}

\subsection*{Structural Congruence} 
With $\equiv$ we denote the least congruence on processes that is an equivalence relation, 
equates processes up to \aconv, 
satisfies the abelian monoid laws for parallel composition (with unit $\nil$), 
the usual laws for scope extrusion, and satisfies the axioms
below. (For the complete set of axioms with double binders, see~\cite{Vasconcelos2012}).
\begin{equation*}
\newc {ab} P \equiv \newc {ba} P 
\qquad 
\kils a \prl \kils a \equiv \kils a 
\qquad
\newc {ab} (\kils a \prl \kils b) \equiv \nil 
\qquad
\newc {ab} \kils a \equiv \nil 
\end{equation*}
The first axiom is needed for reduction;  
the second is needed for soundness; the remaining two are not strictly necessary 
but they allow to throw away garbage processes, specifically 
sessions that are fully cancelled.  

From now on, in all contexts (notably reduction, typing, proofs) we
shall consider processes up to structural equivalence; this is especially
useful in typing. % Should this [below] be a footnote  
Note that $\acc a x . P \not\equiv P \prl \acc a x . P$, \ie, we did not add 
the axiom for replication found in many presentations of $\pi$-calculus. 
We made this choice because adding the axiom would put to question the 
decidability of $\equiv$~\cite{milner:functions-as-processes}, 
and consequently of typing. 

\subsection*{Reduction}  

Do-catch contexts allow for possible exception handling.
\begin{equation*}
  H \grmeq [\,] \:\prl\: \trycatch {[\,]} P 
\end{equation*}
Notation $H[P]$ denotes the process obtained by filling the hole $[\,]$
in context $H$ with process $P$, as usual.

Reduction is defined in two parts: the standard rules
(Figure~\ref{fig:Rstd}), and the cancellation rules
(Figure~\ref{fig:Rcancel}).  First, recall that we work up to
structural equivalence, which means we do not explicitly state that
$P \equiv P' \R Q' \equiv Q \Rightarrow P \R Q$, but of course it
holds.
\begin{figure}
\begin{center}
{\small\begin{gather*}
\begin{array}{rcl}
\newc{ab}\blk{ \gctxi 1 {\sout {a} c . P}   \prl   \gctxi 2 {\sinp {b} {x} . Q} } 
   & \!\!\R\!\! & 
   \newc{ab} \blk{  P \prl Q\subs c x }  \hfill    \txsf{(R-Com)}
\\[.5ex] 
\newc{ab}\blk{ \gctxi 1 {\choiceI {a} {k} . P}   \prl    \gctxi 2 {\branchI {b} Q i I} }  
   & \!\!\R\!\! &   
    \newc{ab}\blk{  P \prl Q_k }  \hfill (k \in I)    \quad \txsf{(R-Bra)}
\\[.5ex]
\newc{ab}\blk{ \gctx { \req {a} {c} . P }     \prl  \acc {b} {x} . Q \prl R }  
   & \!\!\R\!\! &   
     \newc{ab}\blk{  P \prl Q\subs c x  \prl \acc {b} {x} . Q \prl R }  \quad \txsf{(R-Ses)}
\\[.5ex]
P \R Q 
  & \!\!\Rightarrow\!\! & 
     P \prl R \R Q \prl R   \hfill \txsf{(R-Par)}
\\[.5ex]
P \R Q 
  & \!\!\Rightarrow\!\! & 
     \newc {ab} P \R \newc {ab} Q   \hfill \txsf{(R-Res)}
\end{array}
\end{gather*}}%
\end{center}
\caption{Standard Reductions}\label{fig:Rstd}
\end{figure}
In standard reductions, the only notable point is that we discard any
do-catch handlers, since there is no cancellation, which explains why
the $H$-contexts disappear.  For example,
$\newc{ab}\blk{ \trycatch {\sout a c . P} { Q } \prl \sinp b {x} . R }
\R \newc{ab} \blk{ P \prl R\subs c x }$.
The type system ensures that it is sound to discard $Q$, since it
implements the same sessions as $P$ as well as the session on $c$.  On
the other hand, a cancellation activates a handler, which may provide
some default values to a session, completing it or eventually
re-throwing a cancellation.  For example, notice how $c$ appears in
the handler when $a$ is cancelled in
$\newc{ab} \blk{ (\trycatch {\sout a c . P} {\sout c 5.\kils c ) }
  \prl \kils b } \R \newc{ab} \blk{ \sout c 5.\kils c \prl \kils b }
\equiv \sout c 5.\kils c$.
    
%The cancellation reductions follow:   
%
\begin{figure}
\begin{center}
{\small \begin{gather*}
\begin{array}{rclr}
\newc{ab} \blk{ \acc a x . P \prl \kils b \prl R } 
& \R & \newc{ab} \blk{ \acc a x . P \prl R } & \txsf{(C-Acc)}
\\[.5ex]
\newc{ab} \blk{ \req {a} c . P   \prl  \kils {b} \prl R }   
     & \R &   
        \newc{ab} \blk{ P  \prl \kils {b}  \prl \kils c \prl  R }    & \txsf{(C-Req)}
\\[.5ex]
\newc{ab} \blk{ \sout {a} c . P   \prl  \kils {b} }   
     & \R &   
         \newc{ab} \blk{ P  \prl \kils {b}  \prl \kils c }    & \txsf{(C-Out)}
\\[.5ex]
\newc{ab} \blk{ \sinp {a} {x} . P \prl  \kils {b} }   
     & \R &    
        \newc{ab} \newc{xy} \blk{ P  \prl \kils {b} \prl  \kils {y} }   & \txsf{(C-Inp)}
\\[.5ex]
\newc{ab} \blk{ \choiceI {a} {k} . P   \prl \kils {b} }  
   & \R & 
      \newc{ab} \blk{ P \prl \kils {b} }    & \txsf{(C-Sel)}
\\[.5ex]
\newc{ab} \blk{ \branchI {a} P i I \prl \kils {b} }  
      & \R & 
          \newc{ab} \blk{ 
          P_k \prl \kils {b} } \hfill  \tsf{max}(I) = k 
              & \txsf{(C-Bra)}
\\[.5ex]
\newc{ab} \blk{ 
\trycatch \spref P \prl \kils {b} \prl R }   
     & \R &  
        \newc{ab} \blk{ P \prl \kils {b} \prl R } \quad \hfill \tsf{subject}(\spref) = a  
              & \txsf{(C-Cat)}
\end{array}
\end{gather*}}%
\end{center}
\caption{Cancellation Reductions}\label{fig:Rcancel}
\end{figure}

Our cancellation reductions are inspired by cut-elimination for
weakening in Proof Nets for Affine Logic
(see~\cite{Asperti:2002:ILAL}).  Specifically, $\kils a$ behaves like
a weakening (proof net) connective which consumes progressively
everything it interacts with (in logic this happens with cut).  For
example, using \tsf{(C{-}Inp)} we can perform
$\newc{ab}\blk{ \sinp a x . \sout x c \prl \kils b } \R
\newc{ab}\newc{xy} \blk{ \sout x c \prl \kils y \prl \kils b }$ and
then by \tsf{(C{-}Out)} we obtain
$\newc{ab}\newc{xy} \blk{ \sout x c \prl \kils y \prl \kils b } \R
\newc{ab}\newc{xy} \blk{ \kils b \prl \kils y \prl \kils c } \equiv
\kils c$.

In the cancellation of branching, \tsf{(C{-}Bra)}, we choose the
maximum index $k$ which exists given our assumption that index sets
are non-empty and totally ordered.  This is a simple way to avoid
non-determinism via cancellation, \ie, to ensure that cancellation
does not break confluence. 

In the rule \tsf{(C{-}Cat)}, we use a function $\tsf{subject}(\spref)$
which returns the subject in the prefix of $\spref$. This is defined
in the obvious way, \eg,
$\tsf{subject}(\sout a b . P) = \tsf{subject}(\req a b . P) = a$, and
similarly for the other prefixes $\rho$.  If $\spref$ happens to be a
request $\req a c . Q$, then $\kils b$ plays the role of an accept.
This explains why $\kils b$ remains in the result: like an accept, it
must be replicated to deal with possibly multiple requests in $P$ and
$R$.

The rule \tsf{(C{-}Acc)} is not strictly necessary for computation.
It simply reinforces the fact that a request does not cancel an
accept, a fact that may not be as obvious if we simply do not have a
reduction for this case. Moreover, it is important to define how
cancellation interacts with all constructors.

In the cancellation reductions \tsf{(C{-}Acc/Req/Cat)}, $R$ represents
the remaining scope of $a$ and $b$, so in the general case we should
have it also in \tsf{(C{-}Out/Inp/Sel/Bra)}.  However, the typing
system guarantees that in these cases both $a$ and $b$ are linear, and
therefore cannot appear elsewhere, so we preferred to keep the rules
simpler.  The same reasoning applies to the $R$ in the standard
reductions; only \tsf{(R{-}Ses)} needs it.

In rule  \tsf{(C{-}Inp)}, variable $y$ is not free in $P$, a fact that
results from the variable convention.

%%% Local Variables: 
%%% mode: latex
%%% TeX-master: "main"
%%% End: 

\section{Typing affine sessions}
\label{sec:typing}

This section introduces our notion of types and the typing system. It
motivates our choices and discusses the typing of the running example.

\subsection*{Types} 

The session types we use, shown in Figure~\ref{fig:types}, are based
on the constructs of Honda
\etal~\cite{honda.vasconcelos.kubo:language-primitives} with two
exceptions.  First, following Vasconcelos~\cite{Vasconcelos2012} we
allow a linear type to evolve into a shared type.  Second, following
the concept of Caires and
Pfenning~\cite{caires.pfenning:session-types-intuitionistic} we
decompose shared types into accept types $\aT T$ and request types
$\rT T$.
Technically, $\aT T$ corresponds to ${\boldsymbol \oc} T$ (\qt{of
  course $T\,$}) and $\rT T$ to ${\boldsymbol \wn} \dual T $ (\qt{why
  not $\dual T\,$}) from Linear Logic~\cite{Girard87}.  

\begin{figure}
\begin{center}
\begin{gather*}
\begin{array}{rclcr}
T        & \grmeq & \sessend 	    			&\qquad& (\text{nothing})			\\[4pt]
          & \prl      &  \send T . T         				    && (\text{output}) 		\\[4pt]
  		 & \prl      &  \recv T . T  			          && (\text{input}) 	\\[4pt]
   		& \prl      &  \sel \setT {T_i} {i} {I}   	&& (\text{selection})		\\[4pt]
   & \prl      &  \bra \setT {T_i} {i} {I}			&& (\text{branching}) 		\\[4pt]
   & \prl      &  \rT T     								&& (\text{request}) 		\\[4pt]
   & \prl 	  &  \aT T 										&& (\text{accept})				\\
\end{array}
\end{gather*}
\end{center}
\caption{Session Types}\label{fig:types}
\end{figure}

\subsection*{Duality} 

The two ends of a session can be composed when their types are {\em
  dual}, which is defined as an involution over the type constructors,
similarly to Linear Logic's negation except that $\sessend$ is
self-dual.%
\footnote{The expert might notice that logical negation suggests a
  dualisation of all components, \eg,
  $\dual {\send T . T'} \defeq \recv {\dual T} . \dual T' $ In fact
  the output type $\send T . T'$ and the request $\rT T$ hide a
  duality on $T$, effected by the type system, so everything is
  compatible.}
\begin{gather*}
\dual {\send {T_1} . T_2} \defeq \recv T_1 . \dual {T_2} 
\qquad
\dual {\recv T_1 . T_2} \defeq \send T_1 . \dual {T_2}  
\\[4pt] 
\dual {\sel \setT {T_i} {i} {I}} \defeq \bra \setT {\dual{T_i}} {i} {I} 
\qquad
\dual {\bra \setT {T_i} {i} {I}} \defeq \sel \setT {\dual{T_i}} {i} {I} 
\\[4pt] 
\dual{ \rT T } \defeq \aT T
\qquad
\dual{ \aT T } \defeq \rT T 
\qquad
\dual \sessend \defeq \sessend 
\end{gather*}

\subsection*{Interfaces (or typing contexts)}

We use $\Gamma,\Delta,$ and $\Theta$ to range over interfaces,
unordered lists of entries of the form $a\colon T$. 
We note that processes can have multiple
uses of $a \tp \rT T$, which corresponds to the logical principle of
contraction; this is the only kind of entry that can appear multiple
times in a context.
In this way, formation of contexts requires that $T=U=\rT V$ for
contexts $\Gamma,a\colon T$ when $a\colon U\in\Gamma$.
Henceforth, we assume all contexts are of this form, so that, whenever
we write $\Gamma,a\colon T$, then it must be the case that if~$a$ is
in~$\Gamma$, then its type is a request type equal to $T$.

To simplify the presentation, we identify interfaces up to permutations, 
so we do not need to define a type rule for the exchange of entries. 
We also use a pair of abbreviations: $\rT \! \Gamma$ stands for an
interface of the shape $a_1 \tp \rT {T_1}, \ldots,$
$a_n \tp \rT {T_n}$, and similarly, $\sessend\, \Gamma$ stands for an
interface $a_1 \tp \sessend, \ldots, a_n \tp \sessend$.

\subsection*{Typing rules}

Typing judgements take the form:
\begin{equation*}
\seq P \Gamma
\end{equation*}
meaning that process $P$ has interface $\Gamma$.

\begin{figure}[t]
\begin{gather*}
\trule{Out}
         { \seq{ P }{ \Gamma, a \tp T_2 } }
         { \seq{ \sout a b . P }{ \Gamma, a \tp \send T_1 . T_2, b \tp T_1 } }
\qquad
\trule{In}
         { \seq{ P }{ \Gamma, x \tp T_1, a \tp T_2 } }
         { \seq{ \sinp a {x} . P }{ \Gamma, a \tp \recv {T_1} . T_2 } }
\\
\trule{Sel}
         { \seq{ P }{ \Gamma, a \tp T_k } \\ k \in I }
         { \seq{ \choiceI a {k} . P }{ \Gamma, a \tp \sel \setT {T_i} {i} {I}  } }
\qquad
\trule{Bra}
         { \forall  i \in I \,.\, \seq{ P_i }{ \Gamma, a \tp T_i } \\ I \not = \emptyset }
         { \seq{\branchI a P i I  }{ \Gamma, a \tp \bra \setT {T_i} {i} {I}  } }
\\
\trule{Req}
         { \seq{ P }{ \Gamma } }
         { \seq{ \req a {b} . P }{ \Gamma, a \tp  \rT T, b \tp T } }
\qquad
\trule{Acc}
         { \seq{ P }{ \rT \Gamma, x \tp T } }
         { \seq{ \acc a {x} . P }{ \rT \Gamma, a \tp  \aT {T} } }
\\
\trule{Res}
         { \seq{ P }{ \Gamma_1, a \tp T } \\ \seq{ Q }{ \Gamma_2, b \tp \dual T } } % \\ a \not\in\fn{\Gamma_{1,2}} }
         { \seq{ \newc{ab} ( P \prl Q ) }{ \Gamma_1, \Gamma_2 } }
\\
\trule{Contraction}
		 { \seq{P}{ \Gamma, a \tp \rT T, a \tp \rT T} }
		 { \seq{P}{ \Gamma, a \tp \rT T} }
\qquad
\trule{Nil}
         { \phantom{T} }
         { \seq \nil \emptyset }        % { \seq{ \nil }{ \rT \Gamma, \sessend \,\Delta } }
\qquad
\trule{Weak}
         { \seq P \Gamma }
         { \seq{ P }{ \Gamma, \rT \Delta, \sessend \,\Theta } }
\\
\trule{Catch}
         { \seq \spref {\Gamma, a \tp T} \\ \seq P {\Gamma} \\ \tsf{subject}(\spref) = a  }
         { \seq{ \trycatch \spref P }{ \Gamma, a \tp T } }
\qquad 
\trule{Cancel}
         { \phantom{\norequests(T)} }
         { \seq{ \kils a }{ a \tp T } }
\end{gather*}
\caption{Affine Session Typing}\label{fig:typing}
\end{figure}

The typing rules are presented in Figure~\ref{fig:typing}.  We focus
on some key points, noting that a rule can only be applied if the interface of the conclusion 
is well-formed.  

In \tsf{(Out)}, 
an output $\sout a b . P$
records a conclusion $b \tp T_1$, so in fact it composes against
the dual $b \tp \dual {T_1}$. Therefore $\send T_1 . T_2$ really means to send
a name of type  $\dual {T_1}$, which matches with the dual input. The same
reasoning applies to requests; see \tsf{(Req)}. 
In \tsf{(Res)} we split the process so that each part implements one
of the ends of the session. This is inspired by Caires and
Pfenning~\cite{caires.pfenning:session-types-intuitionistic} which
interprets sessions as propositions in a form of Intuitionistic Linear
Logic; the notion of \qt{cut as composition under name restriction}
comes from Abramsky~\cite{Abramsky93CILL}.

In \tsf{(Cancel)}, $\kils a$ can be given any type.   
A \docatchk{} process is typed using rule \tsf{(Catch)}, as follows: if
$\spref$ is an action on $a$ and has an interface $(\Gamma, a \tp T)$,
then the handler $P$ will implement $\Gamma$, \ie, all sessions of
$\spref$ \emph{except} for $a \tp T$ which has been cancelled. 
The rule is sound, since no session
is left unfinished, irrespectively of which process we execute, $\spref$ or $P$. 
Notice that, if $T = \rT U$, we may have more occurrences of
$a \tp T$ in $\Gamma$, because of contraction.  
We made the choice to
allow this, since it does not affect any property.%
\footnote{On the other hand, adding a premise
  $a \not\in\tsf{dom}(\Gamma)$ in \tsf{(Catch)} would cause problems
  with subject reduction.  For example
  $\newc{cd}\blk{ \sout ca \prl \sinp d x . \trycatch {\req a y }{
      \req x z }}$ would be typable, but it reduces to
  $\trycatch {\req a y }{ \req a z }$ which is not typable.}

\subsection*{Typing the book purchase example}

It is easy to verify that the examples in Section~\ref{sec:example}
are well-typed.  For the \tsf{Buyer} we obtain the following
sequent % (for some $\rT {\Gamma_1}$ and $\sessend\, \Delta_1$):
\begin{equation*}
  \seq{
    \tsf{Buyer}
  }{
    \textit{ccard} \tp \tsf{string}, \textit{seller}_1 \tp \rT {T_1} 
  }
\end{equation*}
where
$T_1$ abbreviates type $\recv {\tsf{string}} . \send {\tsf{double}} . 
         \bra \lbrace 
                \tsf{buy} \tp \recv {\tsf{string}} . T_2,
                \tsf{cancel} \tp \sessend 
                \rbrace$
and $T_2$ stands for $\sel \set{ \tsf{accepted} \tp \sessend, \tsf{rejected} \tp \sessend }$.
The behaviour of $b$ inside process \tsf{Buyer} is described by $\dual{T_1}$. % oops

For the \tsf{Bank} we obtain the following sequent
\begin{equation*}
  \seq{
    \tsf{Bank}
  }{
    \textit{bank}_2 \tp \aT {T_3}
  }
\end{equation*} 
with $T_3 =  \recv {\tsf{double}} . 
             \wn ( \recv {\tsf{string}} . T_2 ). 
                         \send { T_2 } . 
                         T_2$.

Finally, for the \tsf{Seller} we have
\begin{equation*}
  \seq{
    \tsf{Seller}
  }{
    \textit{bank}_1 \tp \rT {T_3}, \textit{seller}_2 \tp \aT {T_1} 
  }
\end{equation*}

Interestingly, no type structure is needed for the affine adaptations: 
cancellation is completely transparent.  The variation of \tsf{Seller}
with an added \docatchk{}:
\begin{equation*}
\trycatch {\req {\textit{bank}_1} {\,k'}} {\req {\textit{paymate}} {\,k'}}
\end{equation*}
will simply need $\textit{paymate} \tp \rT {T_3}$ in its interface,
\ie, with a type matching that of $\textit{bank}_1$, but the original
\tsf{Seller} can also be typed in the same way by using weakening to
add an extra assumption to the context.  Similarly, \tsf{BuyerMsg} has
the same interface as \tsf{Buyer}, except that it must include
$\textit{print} \tp \rT {\tsf{string}}$, and again the two processes
can be assigned the same interface by weakening, if needed.  Processes
\tsf{CheckPriceA}, \tsf{CheckPriceB}, and \tsf{BuyerCancel} can typed
under contexts containing $\textit{seller}_1 \tp \rT {T_1}$, exactly
as in process \tsf{Buyer}.
 
Finally, as we shall see next affinity does not destroy any of the
good properties we expect to obtain with session typing.

\subsection*{Typing modulo structural equivalence}

Since we consider processes up to $\equiv$, we have that $P \equiv P'$
and $\seq{P'}\Gamma$ implies $\seq P \Gamma$.  This possibility is
suggested by Milner~\cite{milner:functions-as-processes} and used by
Caires and
Pfenning~\cite{caires.pfenning:session-types-intuitionistic}.
It is necessary because associativity of \qt{$\,|\,$} does not
preserve typability, for example $\newc{ab}\blk{ P \prl (Q \prl R)}$
may be untypable in the form $\newc{ab}\blk{ (P \prl Q) \prl R}$.
This applies also to scope extrusion $\newc{ab}(P \prl Q)$ and
$\newc{ab}P \prl Q$.

\subsection*{Implicit and explicit weakening} 

The rules \tsf{(Weak)} and \tsf{(Cancel)} implement a form of
weakening.  In the case of \tsf{(Weak)}, this weakening is
\emph{implicit}, in the sense that we extend an interface without
adding any behaviour at the process level.  Specifically, to close a
session (introducing $a \tp \sessend$) or to record that a service is
invoked (with $a \tp \rT T$) weakening is standard.  On the other
hand, \tsf{(Cancel)} introduces an \emph{explicit} weakening, since we
maintain the logical concept of a device to perform it, and this is
why, as we shall see, sessions do not get stuck.
% because of weakening. 

\subsection*{Deriving the \tsf{Mix} rule}
A common situation is when we want to compose independent processes, 
\ie, processes that do not communicate. 
This corresponds to the introduction of Girard's \qt{Mix} rule, 
which is derived in our system as follows: %, choosing a $a$ to not appear in $\Gamma_1, \Gamma_2$:%
\begin{gather*}
         \trule{Mix}
         { \seq{ P }{ \Gamma_1 } \\ \seq{ Q }{ \Gamma_2 }  }
         { \seq{ P \prl Q }{ \Gamma_1, \Gamma_2 } }
         \;\;
         \defeq
         \;\;
         \dfrac{  \dfrac{ \seq P  {\Gamma_1} } 
                                { \seq P {\Gamma_1, a \tp \sessend} } \; \tsf{(Weak)} 
                      \qquad
                     \dfrac{ \seq Q {\Gamma_2} } 
                                { \seq Q {\Gamma_2, b \tp \sessend} } \; \tsf{(Weak)}
                  }
                  { \seq{P \prl Q}{\Gamma_1, \Gamma_2}} \;\tsf{(Res)} 
\end{gather*}
Recall that whenever $a,b$ are not free in $P$ and $Q$, we have
$P \prl Q \equiv \newc {ab} \blk{ P \prl Q }$.

\subsection*{The need for double binders} 

In our system, the terms
$\newc{ab}\blk{ \req a c . P \prl \kils b \prl R }$ and
$\newc{ab}\blk{ \req a c . P \prl \kils a \prl R }$ behave
differently.  In particular, there is a reduction of the first one,
using \tsf{(C{-}Req)}, but not of the second one where both the
request and the cancellation are on the same endpoint, namely $a$.
Both terms are typable, but with different derivations: for the first
we need to use \tsf{(Res)} on $a \tp \rT T$ and $b \tp \aT T$, while
on the second we apply \tsf{(Mix)} and possibly \tsf{(Contraction)}
for $a \tp \rT T$.  However, in a system with the standard scope
restriction from $\pi$-calculus there are no separate endpoints and
therefore the two terms above are identified as
$\newc{a}\blk{\req {a} {c} . P \prl \kils {a} \prl R\subs a b}$.

Can we apply a cancellation reduction? The two terms above have
different dynamics, so there is no solution once we identify them.
From the perspective of the typing system, we would not know any more
if the type of $\kils a$ in the last term is $a \tp \rT T$ or
$a \tp \aT T$, since there can be different type derivations for each
possibility.  In the extended abstract of this
work~\cite{Mostrous2014} we did not employ double-binders, and as a
result we had to deal with this ambiguity. %but standard restriction.
Our solution was to impose a rather severe condition on typing,
roughly that $\seq{\kils a}{a \tp T}$ only if $T$ does not contain a
subexpression $\rT T'$. In this way we excluded one of the two
possible cases.  The introduction of double binders resolves such
ambiguities without any restriction on the shape of types.

%%% Local Variables: 
%%% mode: latex
%%% TeX-master: "main"
%%% End: 

\section{Properties}
\label{sec:properties}

This section discusses the tree main properties of our system:
soundness, confluence, and progress.

\begin{thm}[Subject Reduction]
  \label{thm:sr}
  If $\seq P \Gamma$ and $P \R Q$ then $\seq Q \Gamma$.
\end{thm}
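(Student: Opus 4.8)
The plan is to proceed by induction on the derivation of $P \R Q$, with a case analysis on the last reduction rule applied. Since we work up to structural equivalence, I would first record the (routine but necessary) fact that $\equiv$ preserves typability — this is stated earlier in the excerpt — so that the cases for \tsf{(R{-}Par)} and \tsf{(R{-}Res)} are immediate appeals to the induction hypothesis, and so that in the base cases I am free to rearrange parallel components and push restrictions around to expose the redex. A second preliminary I would isolate is a collection of \emph{inversion lemmas}: for each process constructor, a characterisation of which typing derivations can conclude with it. Because of the non-syntax-directed rules \tsf{(Weak)} and \tsf{(Contraction)}, these inversions need the usual care — e.g. $\seq{\sout a b . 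P}{\Delta}$ means $\Delta = \Gamma, a\tp \send{T_1}.T_2, b\tp T_1$ (possibly after absorbing some $\sessend$/$\rT{}$ entries via weakening and collapsing repeated request entries via contraction) with $\seq{P}{\Gamma, a\tp T_2}$.

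For the standard reductions, the key cases are \tsf{(R{-}Com)}, \tsf{(R{-}Bra)}, \tsf{(R{-}Ses)}. In each, I would invert the \tsf{(Res)} at the top of $\newc{ab}(\cdots)$ to get $a\tp T$ on the left and $b\tp \dual T$ on the right, then invert the $H$-context and the active prefix on each side. Here the novelty is the \docatchk{} handler: when the active prefix on the output side is wrapped in $\trycatch{\cdot}{Q}$, rule \tsf{(Catch)} gives $\seq{Q}{\Gamma}$ where $\Gamma$ is exactly the residual interface after the communicated session $a\tp T$ is removed — but the reduct discards $Q$ entirely and keeps only the continuation $P$, which by inversion of \tsf{(Catch)} types against $\Gamma, a\tp T$, i.e. with the same behaviour $P$ would have had. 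So discarding $Q$ is harmless: the residual is retyped using exactly the premises already in hand, re-applying \tsf{(Res)} with the advanced types $T_2$ and $\dual{T_2}$ (using that duality is an involution compatible with the type constructors). For \tsf{(R{-}Ses)} I additionally use \tsf{(Acc)}'s restriction that its continuation's context is of the form $\rT\Gamma, x\tp T$, so that substituting $c$ for $x$ and placing the freshly-spawned copy in parallel (via \tsf{(Mix)}/\tsf{(Contraction)}) typechecks, the replicated $\acc b x . Q$ surviving with the same type.

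The cancellation reductions (Figure~\ref{fig:Rcancel}) are where the real work is, and I expect \tsf{(C{-}Inp)} and \tsf{(C{-}Cat)} to be the main obstacles. The uniform idea is: $\kils b$ has type $b\tp\dual T$ by \tsf{(Cancel)} with $T$ \emph{arbitrary}, so after the dual endpoint $a$ fires its (now cancelled) action, every new endpoint thereby exposed — the continuation's $a$, and in the output/request cases the carried channel $c$, and in the input case the received $y$ — can each be retyped by a fresh \tsf{(Cancel)}, with $\kils b$ re-typed at the advanced type. For \tsf{(C{-}Inp)}, inversion of \tsf{(In)} gives $\seq{P}{\Gamma, x\tp T_1, a\tp T_2}$; the reduct introduces a new binder $\newc{xy}$, so I type $\kils y$ at $y\tp\dual{T_1}$, $\kils b$ at $b\tp T_2$ (note $a$ has vanished), close $x$/$y$ by \tsf{(Res)}, and the variable convention guarantees $y\notin\fv P$ as the rule text already observes. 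For \tsf{(C{-}Cat)}, inversion of \tsf{(Catch)} gives both $\seq{\spref}{\Gamma', a\tp T}$ and $\seq{P}{\Gamma'}$; since $\spref$ is an action on $a$ and $\kils b$ cancels it, we discard $\spref$ and keep $P$, whose interface $\Gamma'$ is precisely what is needed to recombine with $\kils b\tp\dual T$ and $R$ under the original $\newc{ab}$. For \tsf{(C{-}Req)} the subtlety flagged in the text applies: $\kils b$ plays the role of an accept and must persist (and can absorb the contraction for multiple requests), so I keep $b\tp\dual T = b\tp\aT{U}$ and add $\kils c$ at $c\tp\dual U$. The remaining cases \tsf{(C{-}Acc/Out/Sel/Bra)} are analogous and lighter; for \tsf{(C{-}Bra)} the choice of $k = \tsf{max}(I)$ is immaterial to typing since \tsf{(Bra)} types every branch under the same residual $\Gamma$. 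Throughout, the only genuinely delicate bookkeeping is keeping track of which entries are linear (hence unique, so no stray $R$) versus which are request types (hence possibly replicated, requiring \tsf{(Contraction)} in the reconstruction), exactly as anticipated by the discussion following Figure~\ref{fig:Rcancel}.
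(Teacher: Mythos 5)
Your overall case analysis of the reduction rules is sound, and your treatment of the individual cases (discarding the handler in the standard reductions, re-typing freshly exposed endpoints with \tsf{(Cancel)} in the cancellation cases, the persistence of $\kils b$ in \tsf{(C{-}Req)}, the irrelevance of $\tsf{max}(I)$ for typing in \tsf{(C{-}Bra)}) matches what the paper proves in its Small Subject Reduction lemma. But there is a genuine gap in the step where you propose to ``invert the \tsf{(Res)} at the top \ldots{} then invert the $H$-context and the active prefix on each side,'' relying on structural congruence to ``push restrictions around to expose the redex.'' The problem is that the typing sub-derivation of the component owning, say, endpoint $a$ need not end with the rule that introduces the action on $a$: it may end with a \tsf{(Res)} for some \emph{other} private session threaded through the continuation of that very prefix. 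The paper's own example is $\newc{cc'}(\sout a b.\sout{c'}r \prl \sinp c x.Q)$, whose only possible derivation ends with \tsf{(Res)} on $c,c'$, with \tsf{(Out)} on $a$ strictly inside. Structural congruence cannot repair this, because scope extrusion moves binders across parallel composition but never past a prefix; there is no $P'\equiv P$ whose derivation puts the $a$-prefix rule at the root. So your inversion lemma for prefixes, as stated, is false for the processes you actually need to apply it to, and this is precisely the point (interleaved cuts, absence of commutative conversions) that the paper flags as the reason it cannot take your route.

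The paper instead develops machinery to work \emph{inside} an arbitrary derivation: a notion of derivation-with-a-hole and derivation composition, an Atomic Action Sub-derivation lemma guaranteeing that a top-level action on $a$ always corresponds to \emph{some} sub-derivation concluding that action (not necessarily at the root), and a Sub-interface Compositionality proposition ensuring the two extracted sub-derivations can be recombined. The proof then cuts out the two action sub-derivations, forms the small redex $\newc{ab}(\aact_1\prl\aact_2)$, applies Small Subject Reduction to it, and splices the resulting derivation back into the two holes, with a separate sub-case for unrestricted redexes where several requests may hit the same accept or cancellation. To make your plan work you would need to either reproduce this extraction-and-recomposition apparatus or prove a commuting-conversion result letting you permute \tsf{(Res)} instances past prefix rules; without one of these, the central inversion step of your argument does not go through. (Your choice to induct on the reduction rather than on the typing derivation is fine in itself and is not the issue.)
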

\begin{proof}
See Appendix~\ref{annex:thm:sr}.
\end{proof}

\begin{thm}[Diamond property]
  \label{thm:diamond}
  If $\seq P \Gamma$ and $Q_1 \longleftarrow P \R Q_2$ then either
  $Q_1 \equiv Q_2$ or $Q_1 \longrightarrow R \longleftarrow Q_2$.
\end{thm}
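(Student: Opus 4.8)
The plan is to prove this local-confluence (diamond) property by induction on the derivation of one of the two reductions, say $P \R Q_1$, peeling off the congruence-closure rules \txsf{(R-Par)} and \txsf{(R-Res)} until both redexes are exposed ``at the top'' of $P$; throughout I work modulo $\equiv$, which is compatible with $\R$. When the two reductions happen under the same parallel context or the same restriction I recurse; when, after bringing $P$ (modulo $\equiv$) into a form in which a block of restrictions encloses a parallel composition $P_1 \prl \cdots \prl P_n$, the two reductions touch disjoint sets of the components $P_i$, they commute and the common reduct $R$ is obtained by firing each reduction inside the result of the other. So the real work is the case of overlapping redexes.

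Here I would use the typing hypothesis $\seq P \Gamma$ crucially. By the linearity invariant that underlies the typing rules (the same one behind Theorem~\ref{thm:sr}), any name whose type is not a request type occurs at most once at the top level of $P$, and so does its dual; hence the dangerous configurations in which a linear prefix could be consumed in two ways --- such as $\sout a c.P' \prl \sinp b x.Q' \prl R'$ with a second top-level occurrence of $b$ in $R'$, or $\sout a c.P' \prl \kils b \prl R'$ with $\kils b$ duplicated --- are untypable, since they would force $b$ (resp.\ $a$) to appear twice with a linear type. This leaves only two kinds of overlap. The first is when the two redexes coincide: then $Q_1 \equiv Q_2$, where I use that the rules that look nondeterministic are actually deterministic (the branch selected in \txsf{(R-Bra)} is forced by the matching selection, and \txsf{(C-Bra)} always takes $\tsf{max}(I)$), so the two contracta agree up to $\equiv$, including the garbage axioms $\kils a \prl \kils a \equiv \kils a$ and $\newc{ab}(\kils a \prl \kils b) \equiv \nil$. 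The second is when the two redexes share a \emph{replicated} resource: an accept $\acc b x.Q$ consumed by two distinct requests via \txsf{(R-Ses)}; a cancellation $\kils b$ acting as an accept and serving two distinct requests or \docatchk{}-wrapped requests via \txsf{(C-Req)}/\txsf{(C-Cat)}; or a $\kils b$ erased against an $\acc a x.P$ by \txsf{(C-Acc)} while also serving a request on its other side.

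In each such case the shared resource --- the accept, or the replicated $\kils b$, both of which the relevant rules leave in place --- survives either reduction, so the other reduction is still enabled, and a short computation (using $\newc{ab}P \equiv \newc{ba}P$ to orient the redexes when necessary) shows that the two orders converge to a common $R$. Everything else --- spelling out the disjoint case and each instance of the shared-resource case against the rules of Figures~\ref{fig:Rstd} and~\ref{fig:Rcancel} --- is routine; the only mild nuisances are the $\equiv$-bookkeeping (scope extrusion and collapsing fully-cancelled sessions) and the fresh binder $\newc{xy}$ of \txsf{(C-Inp)}, which by the variable convention cannot clash across the two branches. I expect the main obstacle to be precisely the step that invokes typing to rule out all competition for a linear prefix, together with the check that the \docatchk{} contexts $\gctx{\cdot}$ appearing in \txsf{(R-Com/Bra/Ses)} create no overlaps beyond those catalogued; once that is in hand, the coinciding-redex and shared-replicated-resource cases close the diamond.
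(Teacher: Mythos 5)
Your proposal is correct and follows essentially the same route as the paper: both arguments reduce the problem to the critical pairs arising from multiple requests competing for a shared replication or a shared cancellation, and close the diamond by observing that these shared resources persist under the rules \txsf{(R-Ses)}, \txsf{(C-Req)}, \txsf{(C-Cat)} and \txsf{(C-Acc)} (with typing/linearity ruling out competition for linear prefixes, and the $\tsf{max}(I)$ convention making \txsf{(C-Bra)} deterministic). Your version merely makes explicit the induction-on-derivation scaffolding and the appeal to the typing invariant that the paper leaves implicit.
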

\begin{proof}
  The result is easy to establish, since the only critical pairs would
  arise from multiple requests to the same replication or to the same
  cancellation.  However, even in that case the theorem holds because:
  a) replications are immediately available and functional
  (\emph{uniform availability}); b) cancellations are persistent.

Possible critical pairs arise when two reductions overlap (they both use a common sub-process).
Let us consider a process  
$\newc{ab}\blk{ H_1[ \req a {c_1} . P_1 ] \prl H_2[ \req a {c_2} . P_2 ] \prl \acc b {x} . P_3 }$. 
There are two critical pairs, depending on which request reacts first, resulting in either 
the process $\newc{ab} \blk{ P_1 \prl  P_3\subs {c_1}{x} \prl H_2[ \req a {c_2} . P_2 ] \prl \acc b {x} . P_3 }$ 
or alternatively to the process  
$\newc{ab} \blk{ P_2 \prl  P_3\subs {c_2}{x} \prl H_1[ \req a {c_1} . P_1 ] \prl \acc b {x} . P_3 }$. 
But both reduce in one step to % the same process:  
$\newc{ab} \blk{ P_1 \prl P_3\subs {c_1}{x} \prl P_2 \prl P_3\subs {c_2}{x} \prl \acc b {x} . P_3 }$ . 

Another possibility is
$\newc{ab} \blk{ H_1[ \req a {c_1} . P_1 ] \prl H_2[ \req a {c_2} . P_2 ] \prl \kils b }$.  Again, the
critical pair is trivial.  
%The use of do-catch is treated in the same way.

In both cases above, the same reasoning applies if there are more than two possible reductions.  
\end{proof}

The above strong confluence property 
indicates that our sessions are completely deterministic, 
even considering the possible orderings of requests. 

\subsection*{Progress}

Our contribution to the theory of session types is
\textit{well-behaved affinity}, in the sense that we can guarantee
that any session that ends prematurely will not affect the quality of
a program.  Indeed, if we simply allowed unrestricted weakening, for
example by a type rule $\seq \nil \Gamma$ as done
in~\cite{giunti2013algorithmic}, but without any cancellation
apparatus at the language level, it would be easy to type a process
such as $\newc{xz} \blk{ \newc{ab} \sout a x . P \prl \sinp z y . Q }$ 
and clearly not
only $a$ (and everything in $P$) but also $z$ (and everything in $Q$)
would be stuck for ever.  In this section we prove that this never
happens to a well-typed process.

The next notion identifies the set of characteristic processes,
conducting a session $T$ on a channel $a$.

\begin{defi}[Atomic Action]
The atomic actions are defined by the grammar below.
\begin{equation*}
 \aact \grmeq 
               \gctx \spref 
  \:\prl\:  \acc a {x} . P
  \:\prl\:  \kils a 	
\end{equation*}
\end{defi}

\begin{defi}[Active and Inactive Processes]
  \label{def:inact}
  We say that a process $P$ is \emph{inactive}, written $\inact P$, if
  it belongs to the set
  \begin{equation*}
    \{ P \prl \seq{P}{\Gamma} \text{ and } P \equiv \newc{\vec {cc'},
      a_1a_1',\cdots , a_n a_n'}(\acc {a_1} {x_1}.P_1 \mid\dots\mid \acc {a_n} {x_n}.P_n\}
  \end{equation*}
  Otherwise $P$ is called \emph{active}, written $\acti P$.
\end{defi}

The reason for distinguishing inactive processes is that they are
typable (and naturally emerge) because an accept can be used zero or
more times.  For example the process
$\newc{ab}\blk{ \req a {c}. \nil \prl \req a {r}. \nil \prl \acc b {x}
  . \nil }$ is active and can be typed with interface
$(c \tp \sessend, r \tp \sessend)$, but after two steps it reduces to
the inactive process $\newc{ab} \acc b {x} . \nil$.  To type the last
process, we need to transform it to
$\newc{ab} \blk{ \acc b {x} . \nil \prl \nil }$ and apply
\tsf{(Nil,Weak)} so as to obtain $\seq{\nil}{a \tp \rT \sessend}$,
then \tsf{(Res)} to obtain the result.

\begin{defi}[Observation Predicate]
  \label{def:obs}
  We write \barb{a}{P} whenever
  $P \equiv \newc{\vec {bb'}}\blk{ \aact \prl P' }$ with
  $\tsf{subject}(\aact) = a$ and $a \not\in\vec b, \vec {b'}$.  We
  write \wbarb{a}{P} whenever $P \R^* P'$ and \barb{a}{P'}, for some
  $a$.
\end{defi}

Below we write $P \notR$ if there does not exist $Q$ such that $P \R Q$. 
\begin{lem}
  \label{lem:prog}
  If $\seq P \Gamma$ and $P \notR$ and $\acti P$, then
  $\Gamma = \Gamma', a \tp T$ and $\barb a P$.
\end{lem}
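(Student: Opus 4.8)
The plan is to proceed by structural induction on the typing derivation of $\seq P \Gamma$, using the fact (established earlier) that we may reason modulo $\equiv$, so that we can always bring $P$ into a convenient canonical form. The goal, given that $P$ is active and irreducible, is to locate an atomic action $\alpha$ sitting at top level (not under any restriction that binds its subject), and to show its subject is typed in $\Gamma$. First I would dispense with the easy leaves: $\seq{\kils a}{a\tp T}$ gives the barb on $a$ directly, and $\seq\nil\emptyset$ is vacuous because $\nil$ is inactive (so the hypothesis $\acti P$ fails). The prefix cases $\sout ab.P'$, $\sinp ax.P'$, $\choiceI a k.P'$, $\branchI a P i I$, $\req ab.P'$, and the \docatchk{} case all exhibit an atomic action $\gctx\spref$ with subject $a$ at the top, and $\tsf{(Out)}$, $\tsf{(In)}$, etc., each record $a\tp T$ in $\Gamma$; similarly $\acc ax.P'$ is an atomic action with subject $a$, recorded by $\tsf{(Acc)}$. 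The structural rules $\tsf{(Weak)}$ and $\tsf{(Contraction)}$ only enlarge or shrink $\Gamma$ in ways that preserve the existence of the barb, so they follow immediately from the induction hypothesis.

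The two cases that carry the real content are $\tsf{(Par)}$---here realised through $\tsf{(Mix)}$---and $\tsf{(Res)}$. For a parallel composition $P_1\prl P_2$: since $P$ is irreducible, so are $P_1$ and $P_2$ individually (by $\tsf{(R-Par)}$ read contrapositively); since $P$ is active, at least one of $P_1,P_2$ is active, say $P_1$, and the induction hypothesis gives a barb $\barb{a}{P_1}$ with $a$ typed in the corresponding sub-context, hence in $\Gamma$, and $\barb{a}{P_1\prl P_2}$ follows since parallel composition cannot hide a free subject. The case $\newc{ab}(P_1\prl P_2)$ is the crux: the induction hypothesis applied to the active component (there must be one, else $P$ would be inactive) yields a barb on some channel $c$; if $c\notin\{a,b\}$ we are done, as $c$ is typed in $\Gamma_1,\Gamma_2=\Gamma$. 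The danger is $c\in\{a,b\}$, i.e. the only visible action in, say, $P_1$ is on the restricted endpoint $a$. Then I must look at what $P_2$ offers on $b$, whose type is $\dual T$: the core argument is that \emph{for every} atomic action on $a$ with session type $T$ and every atomic action on $b$ with the dual type $\dual T$, some reduction rule fires---this is a finite case analysis matching $\tsf{(R-Com)}$/$\tsf{(R-Bra)}$/$\tsf{(R-Ses)}$ against the complementary action, or one of the cancellation rules $\tsf{(C-$\ast$)}$ when one side is $\kils{}$, or $\tsf{(C-Cat)}$ when a \docatchk{} is present---contradicting $P\notR$. The subtle sub-point is that $P_2$ might have no action visible on $b$ at all because $b$'s occurrences are buried inside further restrictions or inside an accept body; here one invokes linearity of $b$ when $\dual T$ is a session (so $b$ cannot be under an accept body, which would demand $b\tp\rT{\cdot}$) together with a secondary induction / the observation that an active $P_2$ typed with $b\tp\dual T$ and irreducible must expose $b$ at top level unless $P_2$ is inactive, in which case activity has migrated entirely to $P_1$ on a \emph{free} channel, again giving the barb.

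I expect the main obstacle to be exactly this last interaction analysis inside $\tsf{(Res)}$: making precise that an irreducible active process typed with a session-typed endpoint $b$ must present an atomic action on $b$ at the outermost level, and that confronting it with the complementary atomic action on $a$ always triggers one of the reduction rules. This requires being careful about the shapes $\gctx\spref$ can take (the \docatchk{} wrapper), about the placement of $\kils{}$, and about the accept-versus-request asymmetry (a request does \emph{not} reduce against a cancellation on the same endpoint but does against one on the dual endpoint, via $\tsf{(C-Req)}$---which is why double binders matter here). Everything else is bookkeeping: propagating the barb through the non-restriction constructors and discharging the inactive cases against the $\acti P$ hypothesis.
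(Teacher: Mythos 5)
Your proposal is correct and follows the same route as the paper, which proves this lemma by rule induction on the typing derivation of $\seq P \Gamma$ (the paper leaves the case analysis implicit). Your elaboration of the $\tsf{(Res)}$ case --- showing that if the only exposed actions of the two components are on the bound endpoints then some communication or cancellation rule must fire, contradicting $P \notR$ --- is exactly the content that the induction must supply, and the difficulties you flag (the \docatchk{} wrapper, the accept/request asymmetry, and components that are inactive modulo the garbage-collection axioms of $\equiv$) are the right ones.
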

\begin{proof}
By rule induction on the first premise.
\end{proof} 

The assumption $P \notR$ is necessary, since there are active
processes that reduce to inactive processes, such as
$\newc{cc'}\blk{\sout c a \prl \sinp {c'} x}$ or the example under
Definition~\ref{def:inact}.
Our formulation of the Lemma~\ref{lem:prog} is close in principle to
the approach of Dezani-Ciancaglini \etal~\cite{DLY07} and is similar
to Caires and Pfenning's Lemma~4.3
~\cite{caires.pfenning.toninho:mscs}, except that we use the shape of
the process (having a top-level action on $a$) instead of the ability
of the process to perform a labelled transition $P \stackrel{a}{\R}$.

The next notion identifies the set of characteristic processes of the
type of any given session on $a$, following the terminology and main
idea of Coppo \etal~\cite{CDYP2015}.

\begin{defi}[Characteristic Process]\label{def:characteristic-proc}
  The set of processes \cata{a \tp T} is defined, for each channel end
  $a$, inductively on the type $T$. We take $b, b'$ to be different from % DM: Barendreght convention... but ok
  $a$.
\begin{align*}
\cata{a \tp \sessend} & \defeq  \set{ \nil } 
\\
\cata{ a \tp \send T_1 . T_2 } & \defeq 
     \set{ \newc {bb'} \blk{ \sout a {b'} . P \prl Q } \setsep P \in \cata{a \tp T_2}, Q \in \cata{b \tp \dual{T_1}}  }
\\
\cata{ a \tp \recv T_1 . T_2 } & \defeq 
     \set{ \sinp a x .\blk{  P \prl Q } \setsep P \in \cata{a \tp T_2}, Q \in \cata{x \tp T_1}  }
\\
\cata{ a \tp \sel \setT {T_i} {i} {I} } & \defeq 
     \cup_{i \in I} \set{ \choiceI a {i} . P \setsep P \in \cata{a \tp T_i} } \quad (I \not = \emptyset)
\\
\cata{ a \tp \bra \setT {T_i} {i} {I} } & \defeq
     \set{ \branchI a P i I \setsep P \in \cata{a \tp T_i} }
\\
\cata{ a \tp \rT T_1 } & \defeq
     \set{ \newc {bb'} \blk{ \req a {b'}  \prl P } \setsep P  \in \cata{b \tp \dual{T_1}}  }
\\
\cata{ a \tp \aT T_1 } & \defeq
     \set{ \acc a {x} . P  \setsep P  \in \cata{x \tp T_1}  }
\end{align*}
\end{defi}

The clauses \cata{ a \tp \send T_1 . T_2 } and \cata{ a \tp \recv T_1
  . T_2 } identify sets of characteristic processes which are
\emph{non-interleaving}.  This means that they do not mix different
sessions within the same prefix sequence.  Indeed, the reason is that
$P$ must continue its evolution on $a$ independently from $Q$ which
instead works on $b$ or $x$.  Moreover, all characteristic processes
are \emph{complete}, meaning that they provide communication actions
for the whole session as defined by the type.  For example, any
process in
$\cata{ a \tp \send {\sessend} . \recv {\sessend} . \sessend }$ will
perform an output on $a$ and then an input also on $a$. On the other
hand, there are processes such as $\sout a b . \sout c a$ which can be
assigned
$\Gamma, a \tp \send {\sessend} . \recv {\sessend} . \sessend$ but do
not complete the session on $a$.
\begin{prop}\leavevmode
  \label{prop:char-proc}
  \begin{enumerate}[(a),ref={\alph*}]
  \item For all $a$ and $T$, \cata{a \tp T} is
    non-empty; \label{prop:nonempty-char-proc}
  \item For all $P \in \cata{a \tp T}$ we have $\seq{P}{a \tp
      T}$; \label{prop:char-proc-interface}
  \item For all $P \in \cata{a \tp T}$, either $T = \sessend$ or
    \barb{a}{P}.\label{prop:char-proc-observe}
  \end{enumerate}
\end{prop}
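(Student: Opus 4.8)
The plan is to prove all three items by induction on the \emph{size} of the session type $T$ (its number of constructors) rather than on the subterm order; this is what legitimises the appeals to the induction hypothesis on $\dual{T_1}$ that occur in the clauses for $\send {T_1} . T_2$ and $\rT {T_1}$ of Definition~\ref{def:characteristic-proc}, since duality is an involution on constructors and hence $|\dual{T_1}| = |T_1| < |T|$. Part (c) will in fact be a direct case analysis on the outermost constructor of $T$ and will not use the induction hypothesis at all. For (a), the base case $T = \sessend$ holds because $\nil \in \cata{a \tp \sessend}$; in each inductive case the induction hypothesis provides a witness in every set of characteristic processes mentioned by the relevant clause — for the output clause, some $P \in \cata{a \tp T_2}$ together with some $Q \in \cata{b \tp \dual{T_1}}$, and analogously for the others — and assembling these witnesses as the clause prescribes yields an element of $\cata{a \tp T}$; the selection clause additionally uses that $I$ is non-empty to choose a branch, and the branching clause that $I$ is finite so that a component can be fixed for every label.

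For (b), one checks clause by clause that the rules of Figure~\ref{fig:typing} derive $\seq{P}{a \tp T}$ from the typings of the sub-processes furnished by the induction hypothesis. For $\nil \in \cata{a \tp \sessend}$, apply \tsf{(Nil)} and then \tsf{(Weak)}. For the input clause, from $\seq{P}{a \tp T_2}$ and $\seq{Q}{x \tp T_1}$ use the derived \tsf{(Mix)} rule followed by \tsf{(In)}; for selection and branching apply \tsf{(Sel)} and \tsf{(Bra)} directly; for accept apply \tsf{(Acc)} (the side-context of request types being empty here). The output and request clauses are the only ones needing a little care: from $\seq{P}{a \tp T_2}$, rule \tsf{(Out)} gives $\seq{\sout a {b'}.P}{a \tp \send {T_1} . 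T_2, b' \tp T_1}$, and the induction hypothesis on the smaller type $\dual{T_1}$ gives $\seq{Q}{b \tp \dual{T_1}}$, so \tsf{(Res)} composes the complementary endpoints $b'$ and $b$ to give $\seq{\newc{bb'}(\sout a {b'}.P \prl Q)}{a \tp \send {T_1} . T_2}$ — here one uses that typing is taken modulo $\equiv$, since \tsf{(Res)} literally forms $\newc{b'b}(\cdots)$ and $\newc{b'b}R \equiv \newc{bb'}R$; the request clause is handled the same way using \tsf{(Nil)}, \tsf{(Req)} and \tsf{(Res)}. Throughout, the contexts built during the derivation are trivially well-formed, since the names involved are pairwise distinct by the variable convention and by the stipulation that $b,b'$ differ from $a$.

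For (c), if $T = \sessend$ the first disjunct holds. Otherwise, inspection of the clause of Definition~\ref{def:characteristic-proc} for the outermost constructor of $T$ shows that every $P \in \cata{a \tp T}$ is, up to $\equiv$, of the form $\newc{\vec{bb'}}(\aact \prl R)$ with the bound names $\vec{bb'}$ (possibly the empty sequence) chosen different from $a$: for $\send {T_1} . T_2$ and $\rT {T_1}$, $\aact$ is the leading output (respectively request) prefix and $R$ the parallel residual; for $\recv {T_1} . T_2$, selection and branching, $\aact$ is the leading prefix and $R = \nil$; and for $\aT {T_1}$, $\aact$ is the accept process $\acc a {x}.P'$ itself and $R = \nil$. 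In every case $\tsf{subject}(\aact) = a$ and $\aact$ is an atomic action (taking the do-catch context in the grammar of atomic actions to be the empty one), so $\barb a P$ holds by Definition~\ref{def:obs}. I do not foresee a genuine obstacle: the statement is essentially bookkeeping about the shape of characteristic processes. The single point calling for attention is the output and request clauses of (b), where one must reconcile the typing convention that an output declared with type $\send {T_1} . T_2$ sends a name of type $\dual{T_1}$ with the fact that the characteristic process for $\dual{T_1}$ sits on the complementary endpoint, and then apply \tsf{(Res)} up to the binder-symmetry axiom of structural congruence.
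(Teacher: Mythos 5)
Your proof is correct and follows essentially the same route as the paper's (which merely sketches: (a) from the definition, (b) by induction on $T$, (c) by inspection of the shape of characteristic processes). Your refinement of the induction measure to the size of $T$ — so that the appeals to the hypothesis on $\dual{T_1}$ in the output and request clauses are legitimate — is a precision the paper leaves implicit, and your handling of \tsf{(Res)} up to the binder-symmetry axiom is the right way to close the output/request cases.
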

\begin{proof}
  (a) Immediate from the definition.
  (b) Easy induction on $T$.  
  (c) In all cases except for $T = \sessend$, the characteristic processes $P \in \cata{a \tp T}$ are of the form $\alpha$
  or $\newc {bb'} (\aact \prl R)$ with $\tsf{subject}(\aact)=a$, and therefore we have \barb{a}{P}.
\end{proof}

We can now claim a standard progress result, which guarantees that
active processes can always perform some interaction.

\begin{cor}[Progress]
  \label{cor:prog}
  If $\seq P \Gamma$ and $P \notR$, then either $\inact P$ or there
  exists $Q$, $a$, $a'$, $\Delta$, $\Theta$ with $\seq Q \Delta$ and
  $Q \notR$, such that $\seq {\newc{a a'} (P \prl Q)}{\Theta}$ and
  $\newc{a a'} (P \prl Q) \R$.
\end{cor}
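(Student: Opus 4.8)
The plan is to argue from Lemma~\ref{lem:prog}. Suppose $\seq P \Gamma$ and $P \notR$. If $\inact P$ we are done, so assume $\acti P$. By Lemma~\ref{lem:prog}, $\Gamma = \Gamma', a \tp T$ for some $a$ and $T$, and $\barb a P$; that is, $P \equiv \newc{\vec{bb'}}(\alpha \prl P')$ with $\tsf{subject}(\alpha) = a$ and $a \notin \vec b, \vec{b'}$. The idea is to construct a partner process that completes the session on $a$: take $Q$ to be a characteristic process for the dual type, i.e. some $Q \in \cata{a' \tp \dual T}$, where $a'$ is a fresh endpoint. By Proposition~\ref{prop:char-proc}(\ref{prop:nonempty-char-proc}) such a $Q$ exists, and by (\ref{prop:char-proc-interface}) we have $\seq Q {a' \tp \dual T}$.

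Next I would set $\Theta = \Gamma'$ and check that $\seq{\newc{a a'}(P \prl Q)}{\Gamma'}$ is derivable: apply \tsf{(Res)} to the derivations of $\seq P {\Gamma', a \tp T}$ and $\seq Q {a' \tp \dual T}$; this requires only that $a \tp T$ and $a' \tp \dual T$ be the distinguished endpoints, which they are by construction (and $a'$ fresh ensures no clash with $\Gamma'$). It remains to exhibit a reduction of $\newc{a a'}(P \prl Q)$. From $\barb a P$ we know $P$ has a top-level atomic action on $a$ — an input, output, branching, selection, request, accept, or a cancellation $\kils a$. We also need to know what $Q$ offers on $a'$. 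If $T \neq \sessend$, Proposition~\ref{prop:char-proc}(\ref{prop:char-proc-observe}) gives $\barb{a'}{Q}$, and moreover by the definition of $\cata{a' \tp \dual T}$ the head action of $Q$ on $a'$ is precisely the dual shape of $T$; so the head action of $P$ on $a$ (which has type $T$, or is $\kils a$) and the head action of $Q$ on $a'$ form a redex under $\newc{a a'}$ — matching output/input, selection/branching, request/accept by the standard rules, or a cancellation against anything by the \tsf{(C{-}$\ast$)} rules. Pushing the restrictions $\newc{\vec{bb'}}$ outward by scope extrusion (legitimate since processes are considered up to $\equiv$ and the $\vec b$ are distinct from $a, a'$), the contextual rules \tsf{(R{-}Par)}, \tsf{(R{-}Res)} then lift this to a reduction of the whole term. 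If $T = \sessend$, then the head action of $P$ on $a$ is necessarily $\kils a$ (since $\barb a P$ means there \emph{is} a head action on $a$, but the session on $a$ is already finished, so no input/output/etc.\ can be well-typed there) and $Q = \nil$; then $\newc{a a'}(\kils a \prl \cdots \prl \nil)$, but here one must be slightly careful — if the head action is genuinely $\kils a$ with nothing for it to interact with, there may be no reduction.

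This last point is where I expect the main obstacle, and the statement is phrased to sidestep it: we get to \emph{choose} $Q$, so rather than taking $Q = \nil$ when $T = \sessend$ we can instead observe that if the only head action of $P$ on $a$ is $\kils a$, we may pick $Q$ to supply a partner that triggers a \tsf{(C{-}$\ast$)} reduction — but $\kils a$ on one side with $\kils{a'}$ on the other is handled by structural congruence, not reduction, so the cleaner fix is: if $\barb a P$ is witnessed by a \emph{prefix} action (not a bare $\kils a$), pair it with the dual characteristic process as above; if it is witnessed by $\kils a$, pick $Q \in \cata{a' \tp \dual T}$ anyway — a characteristic process always has a head action on $a'$, and $\kils a$ against any such head action reduces by one of \tsf{(C{-}Out/Inp/Sel/Bra/Req/Acc/Cat)} — hence $\newc{a a'}(P \prl Q) \R$ in every case. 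I would work through the case split on the head action of $P$ explicitly, invoking the matching reduction rule each time, and note that freshness of $a, a'$ plus working up to $\equiv$ makes the scope-extrusion bookkeeping routine.
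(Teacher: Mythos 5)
Your proof follows the paper's argument exactly: Lemma~\ref{lem:prog} yields a barb on some $a \tp T$, you compose $P$ with a characteristic process $Q \in \cata{a' \tp \dual T}$ via \tsf{(Res)}, and Proposition~\ref{prop:char-proc} guarantees both sides are ready to react on $a$ and $a'$, giving the reduction. The only divergence is your handling of the $T = \sessend$ corner: the paper simply asserts $T \neq \sessend$ (on the grounds that it would contradict $\barb a P$) and moves on, whereas your proposed patch --- pair a bare $\kils a$ with the dual characteristic process, which \qt{always has a head action on $a'$} --- does not work as stated, since $\cata{a' \tp \sessend} = \set{\nil}$ offers no action for the cancellation to consume. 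This does not affect your main line of reasoning, which is correct and coincides with the paper's proof.
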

\begin{proof}
  We focus on the interesting case, which is when the process $P$ is
  active.  From the assumptions and from Lemma~\ref{lem:prog} we know
  that $\seq{P}{\Gamma, a \tp T}$ and $\barb a P$; notice that
  $T \not = \sessend$, since this contradicts $\barb a P$.  But then
  we can compose with a characteristic process
  $Q \in \cata{ a' \tp \dual T }$ obtaining
  $\seq{\newc{a a'}\blk{ P \prl Q }}{\Gamma}$, and clearly
  $\newc{a a'}\blk{ P \prl Q } \R$ because both $P$ and $Q$ (see
  Proposition~\ref{prop:char-proc}(\ref{prop:char-proc-observe})) are
  ready to react on $a$ and $a'$, respectively, having actions of dual type.
\end{proof}
 
At first sight, Corollary~\ref{cor:prog} may seem to allow some
deadlocks.  For example, it might seem that we could have a deadlocked
sub-process $R$ in parallel to a request $\reqi a 1 b . P'$, then we
could always apply a parallel composition with a forwarder
$\acc {a_i'} x . \reqi a {i+1} x$ (as the $Q$ in
Corollary~\ref{cor:prog}) and there would always be a reduction step.
However, the sub-process $R$ would need to be typed as part
of the larger derivation, but then we arrive at a contradiction: 
by Corollary~\ref{cor:prog}, $R$ must
be able to perform an action given a suitable context, which
contradicts the assumption that it is deadlocked. 
To prove this formally, it suffices to show that applications of 
\tsf{(Res)} do not inhibit any action except the restricted one, 
which is easy to establish. 

\subsection*{Discussion}

Finally, it can be shown that typed processes are strongly
normalising, which is not so surprising since we followed closely the
logical principles of Affine Logic.  This can be shown by a small
adaptation of the standard method~\cite{Girard87}, first by giving an
interpretation of types based on biorthogonals, then by strengthening
the induction hypothesis using a notion of reducibility (a contextual
test for normalisation), and finally by making use of
Theorem~\ref{thm:diamond} to obtain strong normalisation from weak
normalisation.

Note that Progress (Corollary~\ref{cor:prog}) is in a sense more
important, for two reasons: first, a system without progress can still
be strongly normalising, since blocked processes are by definition
irreducible; second, practical systems typically allow recursion, and
in that case the progress property (which we believe can be
transferred without surprises to this setting) becomes much more
relevant.

%%% Local Variables: 
%%% mode: latex
%%% TeX-master: "main"
%%% End: 

\section{Related work and future plans}
\label{sec:related-work}

We divide our discussion on the related work in three parts: relaxing
linearity in session types, dealing with exceptional behaviour, 
and logical foundations. 

The study of language constructs for exceptional behavior (including
exceptional handling and compensation handling) has received
significant attention; we refer the reader to a recent
overview~\cite{ferreira.etal:advanced-mechanisms-service-combination-transactions},
while concentrating on those works more closely related to ours.
Carbone \etal are probably the first to introduce exceptional
behaviour in session
types~\cite{carbone.etal:structured-interactional-exceptions}. They do
so by extending the programming language (the $\pi$-calculus) to
include a \tsf{throw} primitive and a try-catch process. The language
of types is also extended with an abstraction for a try-catch block:
essentially a pair of types describing the normal and the exceptional
behaviour. The extensions allow communication peers to escape, in a
\emph{coordinated} manner, from a dialogue and reach another point
from where the protocol may progress.
Carbone~\cite{carbone:session-choreography-exceptions} and Capecchi
\etal~\cite{capecchi.etal:global-escape-in-multiparty-sessions} port
these ideas to the multi-party setting.
Hu \etal present an extension of multi-party session types that
allow to specify conversations that may be
interrupted~\cite{hu.etal:practical-interruptible-conversations}. Towards
this end, an \tsf{interruptible} type constructor is added to the
type language, requiring types that govern conversations to be
designed with the possible interrupt points in mind.
In contrast, we propose a model where programs with and without
exceptional behaviour are governed by the same (conventional) types,
as it is the norm in functional and object-oriented programming
languages.

Caires \etal proposed the conversation
calculus~\cite{vieira.etal:the-conversation-calculus}. The model
introduces the idea of conversation context, providing for a simple
mechanism to locally handle exceptional conditions. The language
supports error recovery via \tsf{throw} and \tsf{try}{-}\tsf{catch}
primitives. In comparison to our work, no type abstraction is proposed
since their language is untyped, moreover errors are caught in a
context and do not follow and destroy sessions as in our work, and
finally their exception mechanism does not guarantee progress.

Contracts take a different approach by using process-algebra
languages~\cite{castagna.etal:theory-contracts-web-services} or
labeled transition
systems~\cite{bravetti.zavattaro:contract-based-web-services} for
describing the communication behaviour of processes. In contrast to
session types, where client-service compliance is given by a symmetric
duality relation, contracts come equipped with an asymmetric notion of
compliance usually requiring that a client and a service reach a
successful state. In these works it is possible to end a session
(usually on the client side only) prematurely, but there is no
mechanism equivalent to our cancellation, no relationship with
exception handling, and no clear logical foundations.

Caires and Pfenning gave a Curry-Howard correspondence relating
Intuitionistic Linear Logic and session types in a synchronous
\pic~\cite{caires.pfenning:session-types-intuitionistic}.  To avoid
deadlocks, we follow the same approach and impose that any two
processes can communicate over at most a single session; in our case
this in ensured in \tsf{(Res)}.  Therefore, both systems
reject processes of the shape
$\newc{a a' , b b'}\blk{ \sinp a x . \sout b c \prl \sinp {b'} y . \sout {a'} r }$ which are stuck.
This restriction is founded on logical cut and was first introduced
into a process algebra by Abramsky~\cite{Abramsky93CILL}.  
Caires and Pfenning~\cite{caires.pfenning:session-types-intuitionistic} 
achieve a progress property similar to
Corollary~\ref{cor:prog} in our work, but only in the more rigid setting of linearity. 
Adding affinity to their system, without a mechanism similar to our cancellation, 
would allow sessions to get stuck. 
Our formulation allows to type more processes than Linear Logic
interpretations, such as \tsf{BuyerCancel} from
Section~\ref{sec:intro} and the alternative form of choice shown in
Section~\ref{sec:typing}.  Moreover, to our knowledge our work is the
first logical account of exceptions in sessions, based on an original
interpretation of weakening.  Finally, Propositional Affine Logic is
decidable, a result by Kopylov~\cite{kopylov.decidability-lal}, so we
face better prospects for type inference.

As part of future work, we would like to develop an algorithmic typing
system, along the lines of~\cite{Vasconcelos2012}.  We also believe it
would be interesting to apply our technique to multiparty session
types~\cite{HondaK:mulast} based on Proof
Nets~\cite{multipartynets14}. 
Finally, we plan to study the Curry-Howard correspondence with Affine Logic 
in depth, and examine more primitives that become possible with our mechanism.

%%% Local Variables: 
%%% mode: latex
%%% TeX-master: "main"
%%% End: 

\section*{Acknowledgement}

We would like to thank the anonymous reviewers of COORDINATION'14 and
of LMCS, and also Nobuko Yoshida, Hugo Torres Vieira, Francisco
Martins, and the members of the \textsc{Gloss} group at the University
of Lisbon, for their detailed and insightful comments.

%%% Local Variables: 
%%% mode: latex
%%% TeX-master: "main"
%%% End: 

\bibliography{sessions}
\bibliographystyle{plain}

\newpage
\appendix
\section{Subject Reduction}
\label{annex:thm:sr}

\subsection*{Notation for type derivations}

Let $\tyderT{}{\seq P \Gamma}$ stand for a typing derivation 
with conclusion $P$ and interface $\Gamma$. 
To ease the notation, sometimes we ignore the process, 
writing $\tyderT{}{\Gamma}$, 
or even the interface, writing simply $\tyder$. 

We let $\tyderHs{}{\Delta}$ stand for a derivation with a hole that
contributes an interface $\Delta$.
We make a simplification and require the hole to be unguarded in $\tyder$, 
which means that it does not appear inside a sub-derivation 
$\tyderT{1}{\aact}$ of $\tyder$. 
To put simply, we do not consider holes in sub-derivations guarded by a prefix. 

\begin{defi}[Derivation Composition]\label{def:dercomp}
We denote by $\tyderC{1}{\tyder_2}{\Gamma, \Theta}$ 
the derivation obtained from 
$\tyderH{1}{\Delta,\Lambda}{\Gamma,\Lambda}$ by the placement of 
$\tyderT{2}{\Delta, \Theta}$ in the hole, 
assuming the following conditions: 
\begin{enumerate}
\item no element of $\Theta$ is bound in 
$\tyder_1$ (but elements of $\Delta$ can be bound); 
\item $\Gamma, \Theta$ is well-formed (as is $\Delta, \Theta$, by assumption).
\end{enumerate}
\end{defi}
The point of these conditions is to guarantee that compositions are
sound, and specifically they constrain the type of derivation that can
be plugged-in a hole allowing us to know the final interface; this is
very useful in Subject Reduction.  
The part $\Lambda$ is what passes from the inner context to the conclusion, 
which means it is unused in the derivation. 
The idea is that we can substitute it with some $\Theta$ that also passes 
to the conclusion, 
under suitable assumptions, namely that no name in $\Theta$ is bound in 
the derivation context. 
This allows to place in the hole a derivation with different (typically larger) 
interface, which is what happens during our Subject Reduction proof.

After we insert a derivation in place of the hole, 
we thus obtain $\tyderC{1}{\tyder_2}{\Gamma, \Theta}$.

\subsection*{Remark}
If we wanted to provide a detailed definition of derivations with
holes---which we think is not necessary--- we would use an extended
version of the typing system with the extra axiom
$\overline{\seq \cdot \Delta}$ and would require that it appears at
most once in any derivation $\tyder$.  (And when it does appear we
obtain $\tyderHs{}{\Delta}$, otherwise we have a normal derivation.)
 
Regarding the restriction to unguarded holes in a derivation, we do
not need the more general form where the hole can be anywhere, since
parts under prefix do not reduce and we never need to manipulate them.
Also, our restriction simplifies the notion of placing a derivation
with different interface in a hole, which we need to use extensively.
Specifically, if the hole was allowed to appear inside a
sub-derivation with restrictions on the resulting interface --- such
as in the premises of \tsf{(Acc)} or \tsf{(Bra)} --- then an arbitrary
interface added to them ($\Theta$ above) would not guarantee that we
obtain a well-formed derivation, \ie, this plugging-in of a derivation
could be unsound.

\begin{prop}
  \label{prop:subder}
  If $\tyderH{1}{\Delta,\Lambda}{\Gamma,\Lambda}$ and the conditions
  of instantiation detailed previously (see Def.~\ref{def:dercomp})
  are respected for some $\tyderT{2}{\Delta, \Theta}$, 
then $\tyderC{1}{\tyder_2}{\Gamma, \Theta}$. 
\end{prop}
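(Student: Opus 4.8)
The plan is to prove the statement by structural induction on the derivation-with-hole $\tyder_1$, keeping $\tyderT{2}{\Delta, \Theta}$ and the instantiation conditions of Definition~\ref{def:dercomp} fixed. The first step is to pin down which typing rules can occur on the path from the (unguarded) hole down to the root of $\tyder_1$. If any of \tsf{(Out)}, \tsf{(In)}, \tsf{(Sel)}, \tsf{(Bra)}, \tsf{(Req)}, \tsf{(Acc)}, \tsf{(Catch)} occurred on this path, then the hole would sit inside a sub-derivation whose conclusion is an atomic action, which the unguardedness restriction forbids; and \tsf{(Nil)}, \tsf{(Cancel)} are axioms without premises. Hence the only rules lying between the hole and the root are \tsf{(Res)}, \tsf{(Weak)}, and \tsf{(Contraction)}. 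The crucial feature of these three is that their schemas never mention the names occurring in $\Theta$: the names they distinguish are either bound by the rule itself (the two endpoints in \tsf{(Res)}) or request-typed names being duplicated or introduced (in \tsf{(Contraction)} and \tsf{(Weak)}); condition~(1) of Definition~\ref{def:dercomp} keeps $\Theta$ disjoint from all binders of $\tyder_1$, and condition~(2) controls any overlap of $\Theta$ with the weakened or contracted entries.

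For the base case, $\tyder_1$ is the hole itself, so $\Gamma = \Delta$, the composite $\tyderCs{1}{\tyder_2}$ is $\tyder_2$, and its conclusion $\Delta, \Theta = \Gamma, \Theta$ is well-formed by condition~(2). For the inductive step I would case on the last rule of $\tyder_1$. In every case the hole occurs, with the same contribution $\Delta$, in one premise derivation $\tyder'$ (unique for \tsf{(Weak)} and \tsf{(Contraction)}, chosen for \tsf{(Res)}); I would isolate the part of $\Lambda$ passed through that premise, check that conditions~(1) and~(2) are inherited by the resulting sub-problem --- the binders of $\tyder'$ are among those of $\tyder_1$, and the interface of $\tyder'$ with $\Theta$ substituted for its share of $\Lambda$ is well-formed, since it is obtained from the well-formed $\Gamma, \Theta$ by deleting entries and adjoining at most one entry, whose name is either fresh and bound in the derivation (in \tsf{(Res)}) or which duplicates an existing request entry (in \tsf{(Contraction)}) --- apply the induction hypothesis to $\tyder'$, and re-apply the same rule. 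For \tsf{(Res)} one additionally notes that the side condition (that the two premises share only the restricted endpoint) is unaffected, because $\Theta$ appears only in the premise holding the hole; for \tsf{(Weak)} and \tsf{(Contraction)} the rule re-applies verbatim to the enlarged interface, producing conclusion $\Gamma, \Theta$.

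The delicate part, and the step I expect to be the main obstacle, is exactly this bookkeeping: tracking how $\Lambda$ distributes among the premises and verifying that replacing it by $\Theta$ leaves \emph{every} intermediate sequent a well-formed interface, not only the final conclusion. This is settled by propagating well-formedness from the conclusion upwards to the premises at each rule: condition~(2) supplies well-formedness of $\Gamma, \Theta$ at the root, the assumed well-formedness of $\Delta, \Theta$ together with condition~(1) takes care of the hole end, and since each of the three admissible rules turns its premise interface into its conclusion interface only by deleting, copying, or adjoining entries, well-formedness at the conclusion forces it at the premises. No $\alpha$-renaming is ever required, again thanks to condition~(1). In short, Proposition~\ref{prop:subder} records that the composition operation of Definition~\ref{def:dercomp} is well defined, and the argument above is precisely the verification that inserting $\tyder_2$ into the hole of $\tyderHs{1}{\Delta, \Lambda}$ produces a genuine typing derivation with conclusion $\Gamma, \Theta$.
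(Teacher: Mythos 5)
Your proposal is correct and follows exactly the route the paper takes: the paper's proof of Proposition~\ref{prop:subder} is a one-line appeal to induction on the structure of $\tyder_1$, and your argument is a faithful, fleshed-out version of that induction, including the key observation that unguardedness of the hole confines the rules between hole and root to \tsf{(Res)}, \tsf{(Weak)}, and \tsf{(Contraction)}, with the well-formedness bookkeeping discharged by the two conditions of Definition~\ref{def:dercomp}.
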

\begin{proof}
This is easy to establish by induction on the structure 
of $\tyder_1$.
\end{proof}

\begin{lem}[Atomic Action Sub-derivation]
  \label{lem:atomic-typa}
  If $\tyderT{ }{\seq P \Gamma}$ and
  $P \equiv \newc{\overrightarrow{ b b'}} ( \aact \prl Q )$ with
  $\tsf{subject}(\aact) = a$, then \tyder\ is of the shape
  $\tyder_1[ \tyderT{2}{\seq{\aact}{\Delta, a \tp T}} ]$.
\end{lem}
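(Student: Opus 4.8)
The plan is to proceed by induction on the structure of the typing derivation $\tyder$ of $\seq P \Gamma$, keeping the side hypothesis $P \equiv \newc{\vec{bb'}}(\aact \prl Q)$ alive throughout; here $\aact$ and its subject $a = \tsf{subject}(\aact)$ are fixed, and I will use that $a$ is free in $P$ (equivalently, not captured by $\vec b, \vec{b'}$), which is what holds in every application of the lemma, where $\aact$ is exposed at the top level as in Definition~\ref{def:obs}. The preliminary ingredient is a routine fact about $\equiv$: an atomic action is an \emph{unguarded parallel thread} that $\equiv$ can neither dissolve, nor duplicate, nor push under a prefix. Precisely, if $R \equiv \newc{\vec{bb'}}(\aact \prl Q)$ with $a$ free in $R$, then: (i) whenever $R \equiv \newc{cd}(R_1 \prl R_2)$ we have $a \neq c,d$ and either $R_1 \equiv \newc{\vec{ee'}}(\aact \prl Q_1)$ or $R_2 \equiv \newc{\vec{ee'}}(\aact \prl Q_2)$; and (ii) whenever $R$ equals a single prefixed process, a replicated accept, a \docatchk{} process, or a cancellation, then $\aact \equiv R$. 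The only non-standard axioms of $\equiv$ are the three for cancellation, and since they merely merge two identical cancellations or delete a cancellation on a \emph{bound} name, they cannot touch $\aact$, whose subject is free; the rest is the standard analysis of scope extrusion and of the abelian-monoid laws for $\prl$.

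With this in hand I would do the case analysis on the last rule of $\tyder$. If $P$ is syntactically a prefixed process, a replicated accept, a \docatchk{} process, or a cancellation --- so $\tyder$ ends in $\tsf{(Out)}$, $\tsf{(In)}$, $\tsf{(Sel)}$, $\tsf{(Bra)}$, $\tsf{(Req)}$, $\tsf{(Acc)}$, $\tsf{(Catch)}$ or $\tsf{(Cancel)}$ --- then by (ii) $\aact \equiv P$, so (reasoning up to $\equiv$) $\tyder$ already types $\aact$; I take $\tyder_2 := \tyder$ with $\tyder_1$ the empty hole-context, and inspecting these eight rules shows their conclusion interface always records the subject entry $a \tp T$, so $\Gamma = \Delta, a \tp T$. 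If $P$ is syntactically $\nil$ the case is vacuous, since $\nil$ has no unguarded atomic component with a free subject. If $\tyder$ ends in $\tsf{(Weak)}$, $\tsf{(Contraction)}$ or an $\equiv$-conversion step, the process is unchanged (or replaced by a congruent one), so the induction hypothesis applies to the premise derivation $\tyder'$, giving $\tyder' = \tyder_1'[\tyder_2]$, and I set $\tyder_1$ to be $\tyder_1'$ followed by that same step; the hole stays unguarded because none of these rules is a prefix. Finally, if $\tyder$ ends in $\tsf{(Res)}$, say $P = \newc{cd}(P_1 \prl P_2)$ with premises $\seq{P_1}{\Gamma_1, c \tp T_c}$ and $\seq{P_2}{\Gamma_2, d \tp \dual{T_c}}$, then by (i) $c,d \neq a$ and, without loss of generality, $P_1 \equiv \newc{\vec{ee'}}(\aact \prl Q_1)$; the induction hypothesis on the derivation of $P_1$ yields $\tyder_{P_1} = \tyder_1'[\tyderT{2}{\seq{\aact}{\Delta, a \tp T}}]$, and I rebuild $\tyder_1$ by plugging $\tyder_1'$ back into the $\tsf{(Res)}$ instance alongside the unchanged derivation of $P_2$. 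The hole is still unguarded, and since we do not enlarge $\tyder_2$'s interface the plugging is the exact-interface composition of Definition~\ref{def:dercomp} (Proposition~\ref{prop:subder}). When $\vec{bb'}$ is empty and $P = \aact \prl Q$, this same $\tsf{(Res)}$ case applies via the derived $\tsf{(Mix)}$, so no separate case for parallel composition is needed.

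The step I expect to be the main obstacle is the structural-congruence bookkeeping concentrated in the preliminary fact: making it precise that an atomic action with a free subject is a genuine, indivisible thread that $\equiv$ keeps unguarded, and checking that the three cancellation axioms are innocuous here because they act only on duplicate cancellations or on cancellations whose subject is bound. Once that is settled, the descent through $\tyder$ is mechanical; the points that still deserve a second glance are that $\tsf{(Res)}$ already subsumes the parallel-composition situation (there being no stand-alone rule for $\prl$), and that the unguardedness condition on holes is exactly what the base cases deliver and what $\tsf{(Res)}$, $\tsf{(Weak)}$, $\tsf{(Contraction)}$ and $\equiv$-conversion preserve.
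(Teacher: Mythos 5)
The paper states this lemma without giving a proof at all --- it is followed only by an English paraphrase and the example $\newc{cc'}(\sout ab.\sout{c'}r \prl \sinp cx.Q)$ illustrating why the last rule need not introduce $\aact$ --- so there is no official argument to compare against. Judged on its own, your induction on the typing derivation is the right skeleton and is essentially correct: the base cases where $P$ is itself a prefix, accept, \docatchk{} or cancellation do put $a\tp T$ in the conclusion interface; \tsf{(Weak)}, \tsf{(Contraction)} and $\equiv$-conversion pass the hole through unchanged; \tsf{(Res)} localises $\aact$ in one premise; and you are right that there is no stand-alone rule for $\prl$, so bare parallel compositions are typed through \tsf{(Res)}/\tsf{(Mix)} and need no extra case. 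Your observation that $a$ must be free in $P$ (i.e.\ not among $\vec b,\vec{b'}$) is not pedantry but a genuine repair: without it the lemma is literally false, since $\nil\equiv\newc{ab}\kils a$ exhibits an atomic action with subject $a$ while the \tsf{(Nil)} derivation contains no sub-derivation typing $\kils a$. This hypothesis is implicit in the paper (it holds in the one place the lemma is used, inside the \tsf{(Res)} case of Subject Reduction, where $\aact_i$ has subject $a$ or $b$ free in $P_i$), but it deserves to be stated.

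The one point that remains genuinely open is the one you flag yourself: the structural-congruence normal-form fact that an unguarded atomic action with free subject persists, up to $\equiv$, inside one of the parallel components of any $\newc{cd}(R_1\prl R_2)$ presentation. Your justification of it contains a small inaccuracy: you say the cancellation axioms "cannot touch $\aact$, whose subject is free", but $\kils a\prl\kils a\equiv\kils a$ does act on cancellations with free subjects --- it can duplicate or merge the very $\aact$ you are tracking when $\aact=\kils a$. This does not break the argument (merging or duplicating still leaves at least one occurrence to be typed, by \tsf{(Cancel)} possibly together with \tsf{(Contraction)} and the derived \tsf{(Mix)}), but the case should be stated as "at least one of $R_1,R_2$ contains $\aact$" rather than as the axioms being inert on free-subject actions. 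With that correction, and an actual proof of the normal-form fact by induction on the derivation of $\equiv$, your argument goes through.
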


To put this into English, when a communication action $\aact$ appears 
at the top level of a well-typed process $P$, then any derivation $\tyder$ 
of $P$ 
will contain a sub-derivation with $\aact$ as its conclusion. 
Notice that we cannot guarantee that the last rule corresponds to 
the introduction of $\aact$.
We give a simple example:
\[
	P = \newc{c c'}(\,   \underbrace{\sout a b .\sout {c'} r}_\aact  \prl \sinp c x . Q   \,)
\]%
If the above is well-typed, the last rule is \tsf{(Res)} on endpoints $c$ and $c'$  
and cannot be \tsf{(Out)} on channel $a$. 
This situation is obvious in Affine and Linear Logic, 
and in particular it is clear that without commutative reductions  
the instance of \tsf{(Res)} --- which corresponds to a \qt{cut} 
in Affine Logic --- cannot go \qt{up} in the derivation. 
In particular we would need 
$P \equiv \sout a b .\newc{c c'}(\,  \sout {c'} r  \prl \sinp c x . Q   \,)$.\footnote{
This becomes unmanageable for more complex processes with multiple cuts, unless if type derivations 
are manipulated (see for example~\cite{Wadler12propsess}).}
This lemma is needed to identify and manipulate the derivations 
obtained during the proof of Subject Reduction. 
Specifically, when dealing with a redex inside a larger derivation, 
we can use this inversion-like 
lemma to obtain the sub-derivations for the two communications, 
compose them, use Lemma~\ref{lem:ssr} (Small Subject Reduction, defined shortly), 
and finally recompose the derivation for the contractum 
into the original derivation. 

\begin{prop}[Sub-interface Compositionality]
\label{prop:iorder}
If
$\tyderC{1i}{ \tyderT{2i}{\Delta_i, a_i \tp T_i} }{ \Gamma_i, a \tp T_i
}$
($i\in\set{1,2}$) and $\Gamma_1,\Gamma_2$ is well-formed, then
$\Delta_1,\Delta_2$ is well-formed.
\end{prop}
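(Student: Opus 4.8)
I would argue directly, reducing to the single critical situation of a name shared by $\Delta_1$ and $\Delta_2$. Well-formedness of a context is a condition stated pointwise on names: a context is well-formed exactly when every name it mentions receives one and the same type throughout, and, when mentioned more than once, that type has the shape $\rT V$. Each $\Delta_i$ is well-formed on its own, being a sublist of the interface $\Delta_i, a_i \tp T_i$ of the valid derivation $\tyder_{2i}$. Hence the only way $\Delta_1, \Delta_2$ could fail to be well-formed is through a name $c \in \tsf{dom}(\Delta_1) \cap \tsf{dom}(\Delta_2)$ whose type $S_1$ in $\Delta_1$ and type $S_2$ in $\Delta_2$ fail to coincide on a common request type; it suffices to exclude this.

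The engine is a structural lemma about a single composition, proved by induction on $\tyder_{1i}$ along the branch leading to its hole. Since the hole is unguarded (our standing simplification), the rules on this branch can only be \tsf{(Res)}, \tsf{(Weak)} and \tsf{(Contraction)}: every other rule either guards its premise (all prefix rules, \tsf{(Acc)}, and \tsf{(Catch)}) or is a premiseless axiom (\tsf{(Nil)}, \tsf{(Cancel)}). Inspecting these three rules shows that passing from the hole interface down to the conclusion interface $\Gamma_i, a_i \tp T_i$ never alters the type of a surviving name and never increases its multiplicity: \tsf{(Weak)} only appends $\sessend$-typed or request-typed entries (and, when it re-appends an already-present name, well-formedness of its conclusion forces the new copy to carry that name's existing request type), \tsf{(Contraction)} merges two identical $\rT V$ entries, and \tsf{(Res)} deletes exactly the two dual endpoint entries, which thereby become bound. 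Consequently, every name $d$ of the hole interface either is bound by some \tsf{(Res)} on the branch --- and so is absent from $\Gamma_i, a_i \tp T_i$ --- or occurs in $\Gamma_i, a_i \tp T_i$ with exactly the type it has in the hole.

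To finish, fix $c \in \tsf{dom}(\Delta_1) \cap \tsf{dom}(\Delta_2)$. The name $c$ cannot be bound along the branch of $\tyder_{11}$: such a binder would be a binding occurrence inside $\tyder_{11}$, hence by Barendregt's variable convention --- which the ambient derivation respects, and in which (in the Subject Reduction application) $\tyder_{11}[\tyder_{21}]$ and $\tyder_{12}[\tyder_{22}]$ are the two premises of a single \tsf{(Res)} --- it would be distinct from every free name, so $c$ could not occur free in the process underlying $\tyder_{22}$, contradicting $c \in \tsf{dom}(\Delta_2)$; symmetrically for $\tyder_{12}$. By the structural lemma, $c$ therefore occurs with its $\Delta_i$-type $S_i$ in each conclusion interface $\Gamma_i, a_i \tp T_i$. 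If $c$ equals a distinguished name $a_i$, then since $c$ also lies in $\Delta_i$ it occurs at least twice in $\Delta_i, a_i \tp T_i$, so well-formedness of the latter already forces $S_i = T_i$ to be a request type; otherwise $c \in \tsf{dom}(\Gamma_i)$. In every case $c$ occurs in both conclusion interfaces, carrying $S_1$ and $S_2$ respectively. Because $\Gamma_1, \Gamma_2$ is well-formed by hypothesis and, in the ambient derivation, the distinguished names $a_i$ are either bound --- hence fresh --- or request channels, the union of the two conclusion interfaces is well-formed; applying this to $c$ gives $S_1 = S_2$ and that this type is a request type, as required.

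The main obstacle, as is typical in these manipulations, is the bound-name bookkeeping in the last step: names created between a hole and the root of $\tyder_{1i}$ must be kept from clashing with the companion $\Delta_j$, which is precisely where the variable convention and the common provenance of the two compositions from one global derivation are used. Everything else is a routine rule-by-rule verification, parallel to the inductions already carried out for Proposition~\ref{prop:subder} and Lemma~\ref{lem:atomic-typa}.
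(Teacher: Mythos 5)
Your proof is correct and takes essentially the same route as the paper's: both arguments reduce the question to a name occurring in both $\Delta_1$ and $\Delta_2$, observe that such a name either survives from the hole interface to the conclusion interface with its type unchanged (so the well-formedness of $\Gamma_1,\Gamma_2$ applies) or is bound by a restriction along the branch (which the variable convention rules out for a name free in the companion derivation). Your version merely makes explicit the rule-by-rule analysis of the branch and the case split that the paper leaves informal.
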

\begin{proof}
A simple explanation of this proposition is that, if two derivations can be composed 
using \tsf{(Res)}, as is the case when $T_1 = \dual {T_2}$, 
then their sub-derivations can also be composed 
without conflicts.  

In more detail, recall that well-formedness means that only elements of the shape 
$a \tp \rT T$ can appear multiple times in an interface. 
In general, the $\Gamma_i$ can have a larger interface than the $\Delta_i$ 
since other parts of the derivation can extend it, which is no problem. 
The only obstacle to the well-formedness of $\Delta_1,\Delta_2$  
would be the case where the $\Delta_i$ have a larger domain than the corresponding 
$\Gamma_i$, since in that case the well-formedness of $\Gamma_1,\Gamma_2$ does 
not imply that of $\Delta_1,\Delta_2$. 
Now, let us consider the possible differences between $\Delta_i$ (the sub-derivation interface) 
and $\Gamma_i$.  
Because of weakening, $\Gamma_i$ can have a larger domain 
(set of names). Because of contraction, $\Gamma_i$ can have lesser copies 
of some $a \tp \rT T$, but in this case the domain is the same. 
Session constructors change the types so that $\Delta_i$ can contain 
some $a \tp T$ which appears as $a \tp T'$ in $\Gamma_i$, 
so again the domain is the same. 
Finally, some part of $\Delta_i$ can be closed (by scope restriction) and in that case 
--- only in that case --- the domain of $\Delta_i$ can be larger than that of $\Gamma_i$. 
However, by the variable convention all bound names are distinct from each other and 
from all free names, and therefore there is no conflict when composing 
$\Delta_1, \Delta_2$. Therefore, the sub-derivations have interfaces that 
can be composed, i.e., the result is well-formed as required. 
\end{proof}

\begin{lem}[Free Variables]
  \label{lem:fv}
  If $a$ is free in $P$ and $\seq P \Gamma$ then $a\tp T \in \Gamma$.
\end{lem}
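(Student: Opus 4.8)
The plan is to prove the statement by induction on the derivation of $\seq P \Gamma$, with a case analysis on the last rule applied; since \tsf{(Weak)} and \tsf{(Contraction)} are not syntax-directed, this is more convenient than induction on $P$ alone. Before starting I would record that structural congruence preserves both $\fv{\cdot}$ and the interface: the $\equiv$-axioms only permute parallel components, extrude restrictions over subterms free of the bound names, and delete fully-cancelled sessions such as $\newc{bc}\kils b$ and $\newc{bc}(\kils b \prl \kils c)$ --- none of which adds or removes a free variable or changes $\Gamma$ --- so it suffices to analyse one representative process shape per typing rule.

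The axioms are immediate. For \tsf{(Nil)} the process $\nil$ has no free variables and the implication holds vacuously; for \tsf{(Cancel)} the process $\kils a$ has exactly $a$ free, and its interface is precisely $a \tp T$. The \emph{administrative} rules \tsf{(Weak)} and \tsf{(Contraction)} leave $P$ untouched and only enlarge, respectively coalesce, entries of the interface, so an entry $a \tp U$ delivered by the induction hypothesis on the premise still occurs in the conclusion interface --- for \tsf{(Contraction)} because merging two copies of $a \tp \rT V$ keeps $a$ in the domain.

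For the prefix rules \tsf{(Out)}, \tsf{(Sel)}, \tsf{(Req)}, \tsf{(Bra)} the free variables of the conclusion process are its subject $a$, explicitly recorded in the conclusion interface, together with any object name and the free variables of the continuation(s); the induction hypothesis on the premise(s) (on every branch $\seq{P_i}{\Gamma, a \tp T_i}$ in the case of \tsf{(Bra)}) places the latter in $\Gamma$. The binding prefixes \tsf{(In)} and \tsf{(Acc)} need one observation: if $a$ is free in $\sinp c x . P$ then either $a = c$, which is in the conclusion interface, or $a$ is free in $P$ and, by the variable convention, $a \neq x$; the induction hypothesis on the premise $\seq{P}{\Gamma, x \tp T_1, c \tp T_2}$ then yields an entry $a \tp U$ which, since $a \neq x$, lies in $\Gamma$ or is the entry for $c$, so in either case $a$ is in the domain of the conclusion interface $\Gamma, c \tp \recv {T_1} . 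T_2$. The case \tsf{(Acc)} is analogous, the restriction of the premise context to request types being irrelevant to domain membership.

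The case worth spelling out --- the nearest thing to an obstacle, though still routine --- is \tsf{(Res)}, where the double binder removes \emph{two} names. Let $P = \newc{bc}(P_1 \prl P_2)$ with premises $\seq{P_1}{\Gamma_1, b \tp T}$ and $\seq{P_2}{\Gamma_2, c \tp \dual T}$ and conclusion interface $\Gamma_1, \Gamma_2$. If $a$ is free in $P$ then $a \notin \{b, c\}$ by the variable convention and $a$ is free in $P_1$ or in $P_2$; applying the induction hypothesis to the relevant premise gives $a \tp U \in \Gamma_1, b \tp T$ or $a \tp U \in \Gamma_2, c \tp \dual T$, and since $a \neq b$ (resp.\ $a \neq c$) this entry lies in $\Gamma_1$ (resp.\ $\Gamma_2$), hence in $\Gamma_1, \Gamma_2$; well-formedness of the merged interface guarantees consistency with any other occurrence of $a$. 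Finally, for \tsf{(Catch)} with $P = \trycatch \spref {P'}$, the free variables of $P$ are those of $\spref$ together with those of $P'$, and the induction hypothesis on the two premises $\seq \spref {\Gamma, a_0 \tp T}$ and $\seq {P'} \Gamma$ places them all inside $\Gamma, a_0 \tp T$ --- exactly the conclusion interface. This exhausts the rules and completes the induction.
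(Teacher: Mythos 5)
Your proof is correct, and it is the routine rule induction the authors evidently had in mind: the paper states Lemma~\ref{lem:fv} without any proof at all, so there is no written argument to diverge from. Your case analysis covers all the points that actually need checking --- the non-syntax-directed rules \tsf{(Weak)} and \tsf{(Contraction)}, the binders in \tsf{(In)}, \tsf{(Acc)} and the double binder in \tsf{(Res)}, and the preliminary observation that structural congruence (under which typing is taken) preserves free variables --- so nothing is missing.
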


For the purpose of the Substitution Lemma, we denote by
$\vec x\colon T$ the \emph{non-empty} context
$x\colon T, \dots, x\colon T$. According to context formation, if $x$
occurs twice in the context, then it must be the case that~$T$ is a
request type.

\begin{lem}[Substitution]
  \label{lem:sub}
  If $\seq{P}{\Gamma,\vec x\colon T}$ and $x\notin\Gamma$ then
  $\seq{P\subs b x}{\Gamma,b\colon T}$.
\end{lem}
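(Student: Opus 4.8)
The plan is a rule induction on the derivation of $\seq{P}{\Gamma,\vec x\colon T}$, in the usual style of substitution lemmas. The only real subtlety is that, when $T$ is a request type $\rT V$, the variable $x$ may occur several times in the context (because of \tsf{(Contraction)}) and, moreover, these occurrences need not all sit in the ``behavioural'' part of a premise's context: they may be among the $\rT\Delta,\sessend\,\Theta$ introduced by \tsf{(Weak)} or the $\rT\Gamma$ of \tsf{(Acc)}. Throughout I assume, as everywhere in the paper, that the conclusion sequent $\seq{P\subs bx}{\Gamma,b\colon T}$ must be well-formed, so that either $b\notin\Gamma$ or $b$ already occurs in $\Gamma$ with the same request type $T$; and by the variable convention every binder in $P$ (in inputs, accepts, and restrictions) uses a variable distinct from both $x$ and $b$, so that the substitution commutes with binders and no capture arises.

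The base cases are immediate: \tsf{(Nil)} cannot be the last rule, since its context is empty while $\vec x\colon T$ is non-empty by hypothesis; and for \tsf{(Cancel)} we have $P=\kils x$ with context $x\colon T$ for an arbitrary $T$, so $\Gamma=\emptyset$ and re-applying \tsf{(Cancel)} gives $\seq{\kils b}{b\colon T}$. For the inductive step I fix the last rule and partition the occurrences of $x$ in its conclusion context into those inherited from the context(s) of the premise(s) and those introduced by the rule itself (an extra copy added inside $\rT\Delta$ or $\sessend\,\Theta$ by \tsf{(Weak)}, the second copy merged by \tsf{(Contraction)}, or a fresh entry introduced on a prefix's subject or object). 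For each premise, if $x$ still occurs in its context I strip \emph{all} its occurrences to obtain a context in which $x$ is absent, apply the induction hypothesis to get the typing of the substituted subprocess, and re-apply the rule; if $x$ does not occur in that premise's context, then by Lemma~\ref{lem:fv} $x$ is not free in the corresponding subprocess, the substitution is the identity there, and I re-apply the rule directly. In both situations the $x$-entries \emph{introduced by the rule} are simply rewritten into $b$-entries; since those are all of the form $b\colon\rT V$, they may be freely added or duplicated, and they coalesce with any copies of $b\colon\rT V$ already present, so the resulting context is exactly $\Gamma,b\colon T$ and is well-formed. The cases \tsf{(Out)}, \tsf{(In)}, \tsf{(Sel)}, \tsf{(Bra)}, \tsf{(Req)}, \tsf{(Acc)} with $x$ the subject or object of the prefix fit this pattern with no further work; in \tsf{(Res)} the occurrences of $x$ may be split across the two premises, but each part is handled by the induction hypothesis (or by Lemma~\ref{lem:fv} when a part is empty) and the results recompose under the same \tsf{(Res)}, since the bound endpoints are distinct from $x$ and $b$; and in \tsf{(Catch)} the two premises share the interface $\Gamma,a\colon T$, so one invocation of the induction hypothesis on each suffices.

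I expect the only obstacle to be this bookkeeping around \tsf{(Contraction)} together with \tsf{(Weak)}: one has to verify that, after substitution, the collapsed $x$-entries and any copies of $b$ already present together form exactly the entries of $\Gamma,b\colon T$ and that this is still a well-formed context --- which holds simply because all the entries involved are request entries of one and the same type $\rT V$. Beyond this routine accounting the argument is entirely standard.
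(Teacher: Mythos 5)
Your proposal is correct and follows essentially the same route as the paper: rule induction on the typing derivation, with the case split driven by whether $x$ coincides with the name manipulated by the last rule and whether $T$ is a request type, using Lemma~\ref{lem:fv} to dispose of premises whose context does not mention $x$ and \tsf{(Weak)}/\tsf{(Contraction)} to reconcile the collapsed request entries. The only difference is presentational --- you organise the bookkeeping uniformly via the ``inherited versus introduced occurrences'' partition, whereas the paper works out the \tsf{(Catch)} case explicitly --- but the substance is the same.
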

\begin{proof}
  The proof is by rule induction on the first premise. Most cases
  feature two or three subcases depending on~$x$ being equal or
  different to names $a$ and $b$ in the typing rules. Extra subcases
  may also be needed depending on~$T$ being a request type or not. We
  illustrate one representative case: when derivation ends with the
  \tsf{(Catch)} rule.

  \Case $a=x$ and $T$ is not a request type.
  We have $\seq{\trycatch\spref P}{\Gamma, x \tp T}$ and
  $x\notin\Gamma$.
  By rule inversion we obtain $\seq \spref {\Gamma,x\tp T}$ and
  $\seq P {\Gamma}$ and $\tsf{subject}(\spref) = x$.
  By induction, we get $\seq{\spref\subs bx}{\Gamma, b\tp T}$.
  We are now in a position to apply rule \tsf{(Catch)} to obtain
  $\seq{\trycatch {\spref\subs bx} P}{\Gamma, b\tp T}$.
  To complete the case we note that Lemma~\ref{lem:fv} gives us
  that~$b$ is not free in~$P$, hence
  $\trycatch {\spref\subs bx} P = (\trycatch \spref P)\subs bx$.

  \Case $a=x$ and $T$ is a request type.
  We have
  $\seq{\trycatch \spref P}{\Gamma, \vec x\tp T,x \tp T}$
  and $x\notin\Gamma$. By rule inversion we obtain
  $\seq \spref {\Gamma, \vec x\tp T,x\tp T}$ and
  $\seq P {\Gamma,\vec x\tp T}$ and $\tsf{subject}(\spref) = x$.
  By induction, we get $\seq{\spref\subs bx}{\Gamma, b\tp T}$ and
  $\seq{P\subs bx}{\Gamma,b\tp T}$.
  Applying the \tsf{(Weak)} rule as often as needed, we obtain
  $\seq{\spref\subs bx} {\Gamma, \vec b\tp T,b\tp T}$ and
  $\seq{P\subs bx}{\Gamma,\vec b\tp T}$.
  We can easily see that $\tsf{subject}(\spref\subs bx) = b$.
  We are then in a position to apply rule \tsf{(Catch)} to obtain
  $\seq{\trycatch {\spref\subs bx} P\subs bx}{\Gamma, \vec b\tp T}$,
  from which the result follows based the definition of substitution.
 
  \Case $a \neq x$.
  We have $\seq{\trycatch\spref P}{\Gamma, a\tp T,\vec x \tp U}$.
  By rule inversion we obtain $\seq \spref {\Gamma,a\tp T,\vec x\tp U}$ and
  $\seq P {\Gamma,\vec x\tp U}$ and $\tsf{subject}(\spref) = a$.
  Applying the induction hypothesis to both sequents, followed by the
  \tsf{(Catch)} rule we get
  $\seq{\trycatch {\spref\subs bx}{P\subs bx}}{\Gamma, a\tp T, b\tp
    U}$.
  The result follows by applying the \tsf{(Weak)} rule as often as
  needed and the definition of substitution.
\end{proof}

\begin{lem}[Small Subject Reduction]\
  \label{lem:ssr}
    If
    $\seq{\newc {ab} ( \aact_1 \prl \aact_2 )}{\Gamma}$ and
    $\newc {ab} ( \aact_1 \prl \aact_2 ) \R \newc {ab}\, Q$ 
    then
    $\seq{\newc {ab}\, Q}{\Gamma}$.
\end{lem}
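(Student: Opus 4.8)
\textit{Proof plan.} The plan is to argue by a case analysis on the reduction step. Since $\aact_1$ and $\aact_2$ are atomic actions, no reduction can fire under a prefix, inside a replicated body, or inside a do-catch handler; hence, up to structural congruence (in particular the swap $\newc{ab}P\equiv\newc{ba}P$ and commutativity of parallel composition), the step $\newc{ab}(\aact_1\prl\aact_2)\R\newc{ab}\,Q$ is a direct instance of one of \tsf{(R{-}Com)}, \tsf{(R{-}Bra)}, \tsf{(R{-}Ses)}, \tsf{(C{-}Acc)}, \tsf{(C{-}Req)}, \tsf{(C{-}Out)}, \tsf{(C{-}Inp)}, \tsf{(C{-}Sel)}, \tsf{(C{-}Bra)} or \tsf{(C{-}Cat)} (the trailing scope $R$ in those rules that have one being instantiated to $\nil$). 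The chosen rule fixes the shapes of the two actions: one of them, say the one mentioning the restricted endpoint $a$, is a prefix --- possibly wrapped in a do-catch --- or a replicated accept, and the other is its partner on $b$, namely a dual prefix, a replication, or a cancellation $\kils b$. Because \tsf{(Weak)} and \tsf{(Contraction)} act only on interfaces and leave the process untouched, we may assume that the last rule of the given derivation is \tsf{(Res)}, reinstating those structural steps at the very end.

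In every case the method is uniform. First invert \tsf{(Res)}, obtaining $\Gamma=\Gamma_1,\Gamma_2$ together with $\seq{\aact_1}{\Gamma_1,a\tp T}$ and $\seq{\aact_2}{\Gamma_2,b\tp\dual T}$; then invert the introduction rule that produced each action --- \tsf{(Out)}, \tsf{(In)}, \tsf{(Sel)}, \tsf{(Bra)}, \tsf{(Req)}, \tsf{(Acc)} or \tsf{(Cancel)}, preceded by \tsf{(Catch)} inversion when the action sits under a do-catch --- to read off typings of the residuals. Finally, reassemble a derivation of $\newc{ab}\,Q$ using the derived \tsf{(Mix)} rule and \tsf{(Res)}, with \tsf{(Contraction)} coalescing the request-typed entries (the $a\tp\rT U$ of \tsf{(Req)} and the $\rT\Gamma$ of \tsf{(Acc)}) that the recombination may duplicate. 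For the communication rules \tsf{(R{-}Com)} and \tsf{(R{-}Ses)} the carried name is substituted into the receiver's body, so we invoke the Substitution Lemma (Lemma~\ref{lem:sub}). For \tsf{(C{-}Req)}, \tsf{(C{-}Out)} and \tsf{(C{-}Inp)}, where the continuation acquires a fresh $\kils c$ (and \tsf{(C{-}Inp)} also a fresh double binder $\newc{xy}$, closed by pairing $P$ with $\kils y$ via $\seq{\kils y}{y\tp\dual{T_1}}$), and for \tsf{(C{-}Cat)}, where the do-catch handler is activated, the new cancellations are provided by the axiom \tsf{(Cancel)} and fitted into the context by \tsf{(Weak)}. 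When a do-catch handler is instead discarded --- the $H$-contexts of \tsf{(R{-}Com)}, \tsf{(R{-}Bra)} and \tsf{(R{-}Ses)} --- nothing extra is needed: the premise $\seq P\Gamma$ of rule \tsf{(Catch)} already guarantees that the handler realises exactly the interface left behind.

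Two recurring observations carry the bookkeeping. A cancellation $\kils a$ frequently has to be re-typed with a type different from the one it carried in the original derivation, since the type of the restricted endpoint changes across a step; this is always possible, because a subderivation concluding $\seq{\kils a}{\Gamma',a\tp S}$ is, up to structure, an instance of \tsf{(Cancel)} followed by \tsf{(Weak)}, so $\Gamma'$ consists only of request- and $\sessend$-typed entries and is therefore itself weakenable --- we simply re-derive $\seq{\kils a}{a\tp S'}$ by \tsf{(Cancel)} and re-add $\Gamma'$ by \tsf{(Weak)}. And Lemma~\ref{lem:fv} (Free Variables) is used to know that a name carried by a request or an output does not already occur free in the sending continuation (with the harmless exception of request-typed names, whose multiple occurrences are absorbed by \tsf{(Contraction)}), so that the \tsf{(Mix)} recombinations produce well-formed interfaces, and also to discard handlers cleanly after a substitution, exactly as in the \tsf{(Catch)} case of Lemma~\ref{lem:sub}.

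I expect the main obstacle to be the \tsf{(R{-}Ses)} case, where one must substitute the carried name into the body of $\acc b x.Q$ while also keeping $\acc b x.Q$ itself in the contractum: its derivation is used twice, once destructed to type $Q\subs cx$ and once kept verbatim, and the two copies --- which share the context $\rT\Gamma$ coming from \tsf{(Acc)} --- are recombined by \tsf{(Mix)} and \tsf{(Contraction)}; one then has to check that after this recombination and the final \tsf{(Res)} the interface is again exactly $\Gamma$, in particular handling the request-typed name carried by the request, which may or may not already appear in the continuation. All the remaining cases are routine interface bookkeeping, and the argument closes modulo structural equivalence, since the right-hand sides produced by rules such as \tsf{(C{-}Inp)} agree with the target process only up to $\equiv$.
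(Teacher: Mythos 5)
Your plan follows essentially the same route as the paper's proof: peel off \tsf{(Weak)}/\tsf{(Contraction)} to expose the final \tsf{(Res)}, invert the introduction rule for each atomic action, case-analyse on the reduction axiom, apply the Substitution Lemma for \tsf{(R{-}Com)}/\tsf{(R{-}Ses)}, and reassemble with \tsf{(Cancel)}, \tsf{(Weak)}, \tsf{(Contraction)}, \tsf{(Mix)} and \tsf{(Res)}. You also correctly flag the two points the paper treats with most care --- the retypability of $\kils a$ at the evolved endpoint type, and the \tsf{(R{-}Ses)}/\tsf{(C{-}Cat)} bookkeeping where the accept persists or the handler may not mention $a$ (forcing the \tsf{(Mix)} detour) --- so the proposal is sound.
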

\begin{proof}
  %\ref{item:ses}. 
  From $\seq{\newc {ab} (\aact_1 \prl \aact_2)}{\Gamma}$, using
  \tsf{(Contraction)} and \tsf{(Weak)} (zero or more times) and
  \tsf{(Res)}, we obtain
  $\seq{ \alpha_1 }{ \Gamma_1,\rT\Gamma_1',\rT\Gamma_1', a\tp T }$ and
  $\seq{ \alpha_2 }{ \Gamma_2,\rT\Gamma_2',\rT\Gamma_2', b\tp \dual
    T}$, where $\Gamma$ is
  $\Gamma_1,\rT\Gamma_1',\Gamma_2,\rT\Gamma_2', \rT \Theta, \sessend\,
  \Lambda$.
  We proceed by case
  analysis on the various reduction axioms, illustrating the most
  complex cases. 
  Notice that 
  $\tsf{subject}(\aact_1) = a$  
  and $\tsf{subject}(\aact_2) = b$, otherwise there is no reduction. 

  \Case \tsf{(R{-}Com)}. In this case $T$ is $\send T_1 . T_2$ and
  $\Gamma_1,\rT\Gamma_1',\rT\Gamma_1'$ is $\Gamma''_1, c\tp T_1$. We
  have four cases to consider depending on the shape of $H_1$ and
  $H_2$. When both $H_1$ and $H_2$ are $\tsf{do}$-$\tsf{catch}$
  contexts, using rules \tsf{(Contraction)}, \tsf{(Res)}, \tsf{(Catch)},
  \tsf{(Out)} and \tsf{(In)}, we continue with the only derivation scheme
  for the hypothesis, to conclude that
  $\seq{ P }{ \Gamma_1'', a \tp{T_2} }$ and
  $\seq{ Q }{ \Gamma_2,\rT \Gamma_2', \rT \Gamma_2', b \tp \dual{T_2}, x \tp T_1 }$.
  The result follows from the substitution lemma, and rules
  \tsf{(Contraction)}, \tsf{(Weak)} and \tsf{(Res)}.

  \begin{sloppypar}
    \Case \tsf{(C{-}Catch)}. Using rules \tsf{(Contraction)}, \tsf{(Res)},
    \tsf{(Cancel)}, and \tsf{(Catch)}, we conclude:
    $\seq{ \rho }{ \Gamma_1, \rT\Gamma'_1, \rT\Gamma_1', a \tp T }$
    and $\seq{ P }{ \Gamma'_1, \rT\Gamma''_1, \rT\Gamma''_1 }$ 
    where $\Gamma_2$ is the empty context.  % is it?
    We then distinguish two cases. When $a$ is free in $P$,
    from Lemma~\ref{lem:fv} we know that $a\tp U$ is in $\Gamma$,
    hence, by context formation, $U$ is a request type. We conclude
    the proof using rules \tsf{(Weak)}, \tsf{(Contraction)}, \tsf{(Cancel)}
    and \tsf{(Res)}. When $a$ is not free in $P$, we conclude the proof
    from $\seq{ P }{ \Gamma'_1, \rT\Gamma''_1, \rT\Gamma_1'' }$, using
    contraction and the derived rule \tsf{(Mix)}, recalling that
    $P \equiv \newc{a'b'} ( P \prl \nil )$ where $a',b'$ are fresh names. 
    (We first obtain $\newc{ab} ( P \prl \kils b) \equiv P$ and then introduce the 
    new binders which are $\alpha$-convertible to the desired result.)
  \end{sloppypar}

  \Case \tsf{(R{-}Ses)}.  We obtain
  $\seq{\aact_1}{\Gamma_1,\rT\Gamma_1',\rT\Gamma_1',a\tp \aT T}$ and
  $\seq{\aact_2}{\Gamma_2,\rT\Gamma_2',\rT\Gamma_2',b\tp \rT T}$.  The
  part $R$ of the scope in the rule is taken to be
  $\nil$.
  We consider two cases, depending on the shape of the $H$-context; we
  analyse the case where $H$ is empty since the exception handler
  disappears anyway.  Rules \tsf{(Contraction)}, \tsf{(Acc)},
  \tsf{(Req)} allow to conclude that
  $\seq{P}{\Gamma_1,\rT\Gamma_1',\rT\Gamma_1'}$ and
  $\seq{Q}{\Gamma_2,\rT\Gamma_2',\rT\Gamma_2', x\tp U}$. The result
  follows from the application of the substitution lemma, and rules
  \tsf{(Contraction)}, \tsf{(Weak)}, \tsf{(Mix)}, and \tsf{(Res)} if
  $a$ appears in~$P$.
\end{proof}

\subsection*{Theorem~\ref{thm:sr} (Subject Reduction)}
If $\seq P \Gamma$ and $P \R Q$ then $\seq Q \Gamma$.
\begin{proof}
  We proceed by induction on the typing derivation.  Since $P$ must
  contain a redex, the possible last rules for any derivation of
  $\seq P \Gamma$ are \tsf{(Res, Contraction, Weak)}.  From these, the
  cases for \tsf{(Contraction, Weak)} follow immediately by the
  induction hypothesis.  Thus, we can focus on \tsf{(Res)}.

\Case \tsf{(Res)}
We have $\tyderT{}{\seq P \Gamma}$ and the conclusion is of the shape
$\seq{ \newc{ab} ( P_1 \prl P_2 ) }{ \Gamma_1, \Gamma_2 }$ with
premises $\tyderT{{11}}{\seq{P_1}{\Gamma_1, a \tp T}}$ and
$\tyderT{{21}}{\seq{P_2}{\Gamma_2, b \tp \dual T}}$.  (We indicate the
derivations because we need to manipulate them.)

Now we consider the possible reductions of $P$.  If $P_i \R P_i'$ then
the result easily follows from the induction hypothesis followed by an
application of \tsf{(Res)}.  We therefore focus on the case where
(part of) $P_1$ interacts with (part of) $P_2$.  In this case there is
only one possibility, that of a reduction on the endpoints $a$ and $b$
(possibly more than one when $T$ or $\dual T$ is of the shape
$\rT {T'}$), which follows by the well-formedness of
$\Gamma_1, \Gamma_2$.  (Recall that the $\Gamma_i$ can only have
common elements of the shape $c \tp \rT T'$, so it follows that the
$P_i$ can only communicate over the interface provided by $a \tp T$
and $b \tp \dual T$.)

We now consider the possible sub-cases for a redex on $a/b$. 
In order for a redex to be active, both components obviously  
need to be at the top level (i.e., unguarded), 
so we can determine that:%
\begin{gather*}
P_1 \equiv \newc{\overrightarrow{cd}}( \aact_1 \prl P_1' ) 
\qquad
P_2 \equiv \newc{\overrightarrow{rs}}( \aact_2 \prl P_2' )
\end{gather*}
where the $\aact_i$ are dual actions on $a$ and $b$, respectively.  

We thus have $P \R \newc{ab}\, Q$ where $Q$ is
\begin{equation*}
  \newc{\overrightarrow{cd}} \newc{\overrightarrow{rs}} ( Q_1 \prl P_1' \prl P_2')
\end{equation*}
up to structural equivalence as usual.

Now, using Lemma~\ref{lem:atomic-typa} we obtain: %
\begin{equation*}
\tyderC{11}
           { \tyderT{12}{\seq{\aact_1}{\Delta_1, a \tp T}} }
           {\seq{P_1}{\Gamma_1, a \tp T}}
\qquad
\tyderC{21}
           { \tyderT{22}{\seq{\aact_2}{\Delta_2, b \tp \dual T}} }
           {\seq{P_2}{\Gamma_2, b \tp \dual T}}
\end{equation*}%
Moreover, by Proposition~\ref{prop:iorder} we obtain that 
$\Delta_1, \Delta_2$ is a well-formed interface. 

We now consider two cases, where the redices are linear and unrestricted, 
respectively. 
This is because each case requires a slightly different construction 
in order to obtain a derivation for the contractum. 
We start with the linear case, in which we know that 
the $\aact_i$ are the only actions with subject $a$ and $b$  
(for the other case we need to take into account the 
fact that multiple actions with interface  $b \tp \rT T'$ can appear).

\paragraph{\textbf{Linear redex}}
We form the following derivation:%
\begin{displaymath}
\tyder_3 \qquad = \qquad 
\dfrac{\tyderT{12}{\seq{\aact_1}{\Delta_1, a \tp T}}
          \qquad 
          \tyderT{22}{\seq{\aact_2}{\Delta_2, b \tp \dual T}} }
        {\seq{\newc{ab}( \aact_1 \prl \aact_2 )}{\Delta_1, \Delta_2}}
        \quad \mathsf{(Res)}
\end{displaymath}%
We have $\newc{ab}( \aact_1 \prl \aact_2 ) \R \newc {ab} Q_1$ and 
by Lemma~\ref{lem:ssr} and $\tyder_3$ we obtain 
\tyderT{4}{\seq{\newc {ab} Q_1}{\Delta_1, \Delta_2}}. 
Now we can obtain the following derivation:% 
\begin{displaymath}
               \tyderC{11}{ \tyderC{21}{ \tyder_4 }{\Gamma_2,\Delta_1} } 
               {\seq{Q}{\Gamma_1, \Gamma_2}}
\end{displaymath}%
It is easy to check that the conditions of Definition~\ref{def:dercomp} 
are respected (see also Proposition~\ref{prop:subder}), i.e., that 
the result is a valid derivation as required.

\paragraph{\textbf{Unrestricted redex}}
Let us assume, without loss of generality (the other case is symmetric), 
that $T = \rT T'$.  
In this case we must take into account that multiple compositions against 
$b \tp \aT \dual{T'}$ are possible within the derivation, so we perform 
a few more steps. 

 \Case \tsf{(R{-}Ses)}.  
We have the following derivations, assuming 
$\aact_1 = \req a c . R_1$ 
and 
$\aact_2 = \acc b x. R_2$, 
which are obtained by inversion %of the unique rule 
for each prefix, 
taking into account weakening and contraction:
\begin{gather*}
\mathcal{D}_{12} = 
\dfrac{  
	\dfrac{ \tyderT{13}{\seq{R_1}{\Delta_1', \rT \Sigma_1, \rT \Sigma_1}}  }
			 {\seq{\req a c . R_1}{\Delta_1', \rT \Sigma_1, \rT \Sigma_1, a \tp \rT {T'}, c \tp T' }} \quad \tsf{(Req)}
}{ 
    \overline{ 
    	\:\: \seq{\req a c . R_1}{\Delta_1', \rT \Sigma_1, \rT \Theta_1, \sessend\, \Lambda_1, a \tp \rT {T'}, c \tp T'}  \:\: 
    }
} \quad \tsf{(Contraction), (Weak)}
\\[.5ex]
\mathcal{D}_{22} = 
\dfrac{  
	\dfrac{ \tyderT{23}{\seq{R_2}{\Delta_2', \rT \Sigma_2, \rT \Sigma_2, x \tp T'} }  }
			 {\seq{\acc b x . R_2}{\Delta_2', \rT \Sigma_2, \rT \Sigma_2, b \tp \aT {T'}} } \quad \tsf{(Acc)}
}{ 
    \overline{ 
    	\:\: \seq{\acc b x . R_2}{\Delta_2', \rT \Sigma_2, \rT \Theta_2, \sessend\, \Lambda_2, b \tp \aT {T'}} \:\: 
    }
} \quad \tsf{(Contraction), (Weak)}
\end{gather*}%
with $\Delta_1 = \Delta_1', \rT \Sigma_1, \rT \Theta_1, \sessend\, \Lambda_1, c \tp T'$  
and $\Delta_2 = \Delta_2', \rT \Sigma_2, \rT \Theta_2, \sessend\, \Lambda_2$ 
with 
$\Delta_2' = x_1 \tp \rT {T_1}, \ldots, x_n \tp \rT {T_n}$. 

From the substitution lemma we obtain
$\tyderT{24}{\seq{R_2 \subs c x}{(\Delta_2', \rT \Sigma_2, \rT
  \Sigma_2, x \tp T')\subs c x }}$.

First we form:
\begin{displaymath}
\tyder_{14} \qquad = \qquad 
\dfrac{\tyder_{13} 
          \qquad 
          \tyder_{24} }
        {\seq{R_1 \prl R_2 \subs c x}{\Delta_1', \rT \Sigma_1, \rT \Sigma_1, (\Delta_2', \rT \Sigma_2, \rT \Sigma_2, x \tp T')\subs c x}} \quad \mathsf{(Mix)}
\end{displaymath}%
The above interface is clearly well-formed. First, by Proposition~\ref{prop:iorder}  
$\Delta_1', \Delta_2'$ is well-formed, and also $c \tp T'$ is composable with $\Delta_1'$ by assumption. 

Now, we form the following derivation, which is equal to the original except that we put 
$\tyder_{14}$ in the place of $\tyder_{12}$:%
\begin{displaymath}
\tyder_3 \qquad = \qquad 
\dfrac{ \tyderCs{11}{ \tyder_{14}  } 
          \qquad 
          \tyder_{21} }
        {\seq{\newc{ab} Q}{\Gamma_1, \Gamma_2]}} \quad \mathsf{(Res)}
\end{displaymath}%
We omit some applications of \tsf{(Contraction)} on $\tyderCs{11}{ \tyder_{14}  }$, 
as well as below \tsf{(Res)},  
which are needed to obtain the original interface, as required. 

 \Case \tsf{(C{-}Acc/Req/Cat)}. Similar to the above.  
\end{proof}

%%% Local Variables:
%%% mode: latex
%%% TeX-master: "main"
%%% End:

\end{document}

%%% Local Variables: 
%%% mode: latex
%%% TeX-master: t
%%% End: 